\newtheorem{lemma}{Lemma}[section]
\newtheorem{theorem}{Theorem}[section]
\DeclareRobustCommand{\rchi}{{\mathpalette\irchi\relax}}
\newcommand{\irchi}[2]{\raisebox{\depth}{$#1\chi$}} 
\title{Cache-Augmented Inbatch Importance Resampling for Training Recommender Retriever}
\author{Jin Chen$^1$, Defu Lian$^2$, Yucheng Li$^2$, Baoyun Wang$^3$, Kai Zheng$^1$, Enhong Chen$^2$\\
$^1$University of Electronic Science and Technology of China\\
$^2$University of Science and Technology of China, $^3$Hisense \\
chenjin@std.uestc.edu.cn, liandefu@ustc.edu.cn, liycustc@mail.ustc.edu.cn, \\
wangbaoyun@hisense.com, zhengkai@uestc.edu.cn, cheneh@ustc.edu.cn
}
\begin{document}

\maketitle

\begin{abstract}
  Recommender retrievers aim to rapidly retrieve a fraction of items from the entire item corpus when a user query requests, with the representative two-tower model trained with the log softmax loss. For efficiently training recommender retrievers on modern hardwares, inbatch sampling, where the items in the mini-batch are shared as negatives to estimate the softmax function, has attained growing interest. However, existing inbatch sampling based strategies just correct the sampling bias of inbatch items with item frequency, being unable to distinguish the user queries within the mini-batch and still incurring significant bias from the softmax. In this paper, we propose a \underline{C}ac\underline{h}e-Augmented \underline{I}nbatch \underline{I}mportance \underline{R}esampling ($\rchi$IR) for training recommender retrievers, which not only offers different negatives to user queries with inbatch items, but also adaptively achieves a more accurate estimation of the softmax distribution. Specifically, $\rchi$IR resamples items for the given mini-batch training pairs based on certain probabilities, where a cache with more frequently sampled items is adopted to augment the candidate item set, with the purpose of reusing the historical informative samples. $\rchi$IR enables to sample query-dependent negatives based on inbatch items and to capture dynamic changes of model training, which leads to a better approximation of the softmax and further contributes to better convergence. Finally, we conduct experiments to validate the superior performance of the proposed $\rchi$IR compared with competitive approaches.
\end{abstract}

\section{Introduction}
Recommender systems (RS) have been prevalently developmented to address the issue of information overload, with the purpose of mining potential preferred items for users, bringing great commercial revenues to enterprises. Recommender retrievers, which retrieve a subset of highly relevant items from numerous items, are a key component of the commonly-used two-stage architecture~\cite{cheng2016wide,gomez2015netflix,yang2020mixed}.
The two-tower models, where the queries and items are represented with separate models, are representative of retrieval models. Generally, given the similarity scores of user and item embeddings, the models attempt to minimize the loss function~\cite{covington2016deep,huang2013learning}, i.e., the log-softmax loss, guiding the interacted items having higher similarity scores.
However, it is extremely inefficient to calculate the partition function in the log-softmax loss over the item corpus.
Negative Sampling, which samples a number of items to estimate the partition function or the gradient, has attained growing interests in recommender retrievers~\cite{huang2020embedding,lindgren2021efficient,wang2021cross,yang2020mixed,yi2019sampling}.

A straightforward method is to take the static uniform distribution as the proposal distribution to sample items from the entire item set.
Nevertheless, restricted by the computational resource, a more practical strategy is \textit{inbatch sampling}~\cite{hidasi2015session,yang2020mixed}, where the other training samples in the mini-batch are shared as negative samples. 
Considering the complicated features in online systems, it is time-consuming to get the embeddings of users and items via the tower model. Encoding the sampled unseen items outside the batch occupies additional huge computational cost. Thus, the inbatch sampling is more efficient and profitable for online systems.
There exists a sampling bias of training data in inbatch sampling, since the interactions are received from user logs and the item exposure follows the long-tailed distribution. Accordingly, several works correct the sampling bias using the popularity of items~\cite{yi2019sampling,zhou2021contrastive} and have been successfully exploited into the online systems.

However, there still exists a huge bias of the approximated gradient from the full softmax with the inbatch sampling. The sampling strategies eventually fit in the sampled-softmax, which approximates the exact gradient of softmax loss depending on the importance sampling. The bias can be eliminated by forcing the proposal distribution, i.e., sampling distribution, to the softmax distribution~\cite{bengio2008adaptive}. Previous works~\cite{yi2019sampling,yang2020mixed} indeed correct the sampling distribution for inbatch sampling as the frequency-based distribution (or popularity-based distribution), which remains a query-independent proposal. The ideal softmax distribution is dependent on different queries and changes dynamically during the training process. Existing studies~\cite{ding2020simplify,lian2020personalized,zhang2013optimizing,weston2011wsabie,zhu2022gain} in negative sampling for recommender systems have proved that adaptive samplers lead to better convergence of the recommendation quality. Thus, there is an urgent need to develop an adaptive selection strategy for inbatch sampling to reduce the bias towards the ideal softmax distribution.

To this end, we propose a \underline{C}ac\underline{h}e-Augmented \underline{I}nbatch \underline{I}mportance \underline{R}esampling, shorted as $\rchi$IR, for training recommender retrievers. Concretely, we resample the batch items according to the modified softmax weight over the mini-batch, which offers different negatives to user queries, instead of sharing a same set of batch items. This importance resampling based selection not only distinguish users but also provides a more accurate estimation of the full softmax. Furthermore, inspired by the recent work~\cite{yang2020mixed}, which introduces additional uniformly sampled negatives to tackle the sampling bias, a cache mechanism is utilized to further eliminate the approximation bias caused by the sampling bias of the inbatch sampling. Rather than the uniformly sampled items, we cache the more frequently sampled items into a fixed-size set, usually having the same size as the mini-batch. The frequently sampled items tend to have higher similarity scores on average, which can be regarded as \textit{hard} negative samples for training models. Therefore, the cache plays a role of item augmentation since more informative negatives are introduced to accelerate the model convergence. The proposed $\rchi$IR is then evaluated on five public real-world datasets, demonstrating the superior performance over competitive baseline approaches in terms of effectiveness.

\section{Preliminaries}
\subsection{Recommender Retrievers}
Given the user query, recommender retrievers aim to rapidly response with highly relevant items from the full item corpus. Under the paradigm of the two-tower models, the user queries and items are represented by $ \{ \bm{q}_i \in \mathbb{R}^{d_u} \}_{i=1}^{M} $ and $\{\bm{e}_j \in \mathbb{R}^{d_i}\}_{j=1}^N$, where $M$ and $N$ denote the number of queries and items, respectively. Here, we assume that the number of queries is finite for simplification. The user tower and item tower then map the feature vectors to a $k$-dimensional embedding space with separate functions $\phi_{Q}: \mathbb{R}^{d_u} \rightarrow \mathbb{R}^{k}$ and $\phi_{I}: \mathbb{R}^{d_i} \rightarrow \mathbb{R}^{k}$, respectively.
A similarity scorer $s$, such as the widely used inner product function $s(u, i) = \langle \phi_{Q}(\bm{q}_u), \phi_{I}(\bm{e}_i) \rangle $, outputs the relevance of the query $u$ and candidate item $i$. 
In this paper, we investigate the log-softmax loss for training the two-tower models, where the preference probability of the item with respect to the query $u$ and item $i$ is calculated with $P(i|u)= \frac{ \exp\left(s \left(u,i \right) \right)}{\sum_{j\in \mathcal{I}}\exp\left(s \left(u,j\right) \right) }$, where $\mathcal{I}$ denotes the candidate item corpus. By maximizing the likelihood, the objective function follows as:
\begin{equation}
    \mathcal{L}_{\text{softmax}}(\mathcal{D},\Theta) = - \frac{1}{\vert \mathcal{D} \vert} \sum_{(u,i) \in \mathcal{D}}\log P(i|u) =  - \frac{1}{\vert \mathcal{D} \vert} \sum_{(u,i) \in \mathcal{D}}\log \frac{ \exp\left(s \left(u,i \right) \right)}{\sum_{j\in \mathcal{I}}\exp\left(s \left(u, j \right) \right) }
    \label{eq:softmax_loss}
\end{equation}
where $\Theta$ denotes the model parameters in $\phi_Q$ and $\phi_I$. $\mathcal{D}$ denotes the training data, each tuple of which records an interaction between the user query $u$ and the item $i$. The two-tower model is such trained and the interacted items, i.e., positive pairs, are encouraged to get higher scores than uninteracted items. The maximum inner product search (MIPs) then can be fruitfully implemented, such as SPANN~\cite{chen2021spann}, ScaNN~\cite{guo2020accelerating} and FAISS~\cite{johnson2019billion} to search for the topk items given a user query, meeting the requirements of rapid online response.

When the number of items $N$ becomes extremely large, it would incur a substantial computational cost of the partition function, i.e., the denominator in Eq~\eqref{eq:softmax_loss}. Intuitively, the partition function can be estimated by sampling a subset from candidate items. The sampled softmax loss then behaves as the objective function,  
\begin{equation}
    \mathcal{L}_{\text{sampled\_softmax}}(\mathcal{D},\Theta) =   - \frac{1}{\vert \mathcal{D} \vert} \sum_{(u,i) \in \mathcal{D}}\log \frac{ \exp\left(s' \left(u, i \right) \right)}{\sum_{j\in \mathcal{S}}\exp\left(s' \left(u, j \right) \right) }
    \label{eq:sampled_softmax_loss}
\end{equation}
where $\mathcal{S}$ represents the subset of the sampled items. $s' \left(u,i \right) = s\left(u, i \right) - \log p(i|u)$ denotes the corrected logit with the sampling probability. Obviously, the computational cost is considerably reduced, which only depends on the size of the sampled subset rather than the total item corpus.

\subsection{Negative Sampling in RS}
To guarantee the unbiased gradient estimation of the sampled softmax, the softmax distribution is the ideal proposal distribution~\cite{bengio2008adaptive,jean2014using}, which is prohibitively expensive given the huge number of candidates and features. To meet the requirements of quick response, the query-independent sampling distribution, such as the uniform distribution, is applied to sample items. These distributions remain constant during the dynamic model training and they will select easy samples in later training epochs, which contribute a limited gradient to model convergence. The adaptive samplers are further investigated to sample more informative items depending on the model and can be grouped into two categories. One leverages techniques of adversarial frameworks~\cite{ding2019reinforced,park2019adversarial,wang2017irgan}, where negative samples are generated through the complicated networks. The other category attempts to assign a higher probability to items with larger similarity scores~\cite{jin2020sampling,rendle2014improving,zhang2013optimizing} or to reduce the bias of the sampling distribution from the softmax distribution~\cite{lian2020personalized,blanc2018adaptive}.
One promising approach is the two-pass sampler~\cite{bai2017tapas,zhang2013optimizing}, which samples a fixed-sized pool of items from the simple static distribution and then selects the items with the highest similarity scores in the pool to calculate the loss. Due to its intractable sampling distribution, it is seldom cooperated within the sampled softmax loss and would perform worse due to its huge approximation bias from the softmax distribution.

However, in the online service of recommender systems with a huge magnitude of items and features, the query and item embeddings are constantly calculated through the large-scale deep towers. The inbatch sampling becomes popular since it avoids feeding additional negative items to the item tower~\cite{yang2020mixed}, saving much computational cost. Moreover, the inbatch sampling does not improve the computational costs from the complexity perspective~\cite{rendle2022item} but lowers the time cost on modern hardwares. To reduce the bias caused by the item exposure in online systems, several works correct the bias with item frequency which fits the sampled softmax loss with the item frequency as the proposal~\cite{covington2016deep,yang2020mixed,yi2019sampling,zhou2021contrastive}.

\section{Training Recommender Retrieves with cache-augmented inbatch importance resampling}

The inbatch sampling has attracted more attention due to its high efficiency for deployment on modern hardwares in online systems. Existing works attempt to correct the sampling bias within the mini-batch by the item frequency, which finally acts as the popularity-based proposal distribution in the sampled softmax loss. However, considering the ideal softmax distribution, whose sampling probability is associated with the given query, there still exists a bias of gradient estimation from the softmax by just correcting the selection bias of inbatch sampling. To achieve a better approximation of the softmax, we propose the in\underline{b}atch \underline{i}mportance \underline{r}esampling (BIR for short) and the cache-augmented importance resampling ($\rchi$IR), which not only delivers different negative items to different queries based on the inbatch items but also yields an adaptive softmax estimation during model training.

\subsection{Reducing Bias of Inbatch Sampling based on Importance Resampling}
Items within the mini-batch actually follow the popularity-based distribution since the observed items are exposed to users under the long-tailed distribution in today's recommender systems. Thus, a query-independent sampler, i.e., frequency-based sampler, fits the sampled softmax with the inbatch sampling. However, this query-independent sampler corrects the selection bias of the inbatch sampling but does not improve the estimation performance. 

The importance resampling~\cite{bishop2006pattern} is adopted here to further correct the bias from the ideal softmax distribution. Given the mini-batch $B$, each item is resampled for the query $u$ based on the weight
\begin{displaymath}
    w(i|u) = \frac{\exp \left( s(u,i) - \log pop(i) \right)}{\sum_{j\in B}\exp \left( s(u,j) - \log pop(j) \right)},
\end{displaymath}
where the probability $pop(i)$ here denotes the frequency of the items summarized from the training data.
The resampled item set with respect to the query $u$ is denoted as $\mathcal{R}_u$. 

By applying the resampling strategy, items with larger similarity scores have a higher chance of being sampled and may appear multiple times. Moreover, the sampled items for different user queries are different while the queries share the same item set in inbatch sampling. The following theorem provides the evidence that the sampled items based on the inbatch importance resampling can accurately approximate the softmax distribution.

\begin{lemma}
Assume items follow a distribution $P = \{p_1, p_2, ..., p_N\}$, the item set $\mathcal{J}$ sampled from $P' = \{ \frac{1}{p_1}, \frac{1}{p_2}, ... , \frac{1}{p_N} \}$ follows the uniform distribution when $\vert \mathcal{J} \vert \rightarrow \infty $.
\end{lemma}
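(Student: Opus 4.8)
The plan is to read the claim through the standard sampling--importance--resampling construction implicit in the definition of $w(i|u)$ and $\mathcal{R}_u$: we first draw a pool of $m$ i.i.d.\ items $X_1,\dots,X_m$ from the base distribution $P$, and then form $\mathcal{J}$ by resampling with replacement from this pool, giving each pool element $X_k$ a weight proportional to $1/p_{X_k}$ (that is, proportional to the unnormalized masses listed in $P'$). The target to prove is that the marginal law of a single element of $\mathcal{J}$ converges to the uniform distribution over the $N$ items, which is the sense intended by ``$|\mathcal{J}|\to\infty$''.

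First I would write the exact per-step resampling probability. Let $N_j=|\{k:X_k=j\}|$ be the multiplicity of item $j$ in the pool. Because every copy of $j$ carries weight proportional to $1/p_j$, a single resample equals $j$ with probability
\begin{displaymath}
\frac{\sum_{k:\,X_k=j}1/p_{X_k}}{\sum_{k=1}^{m}1/p_{X_k}}
=\frac{N_j/p_j}{\sum_{l=1}^{N}N_l/p_l}.
\end{displaymath}
Then I would divide numerator and denominator by $m$ and invoke the strong law of large numbers: $N_j/m\to p_j$ almost surely, so $N_j/(m\,p_j)\to 1$ for each $j$ and $\tfrac1m\sum_l N_l/p_l\to\sum_{l=1}^{N}1=N$. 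Hence the displayed ratio converges almost surely to $1/N$, exactly the uniform probability, and since the ratio lies in $[0,1]$, bounded convergence upgrades this to convergence of the marginal law of a resampled item to the uniform distribution.

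The one place to be careful is not to shortcut the argument with expectations. A tempting one-liner is that $\mathbb{E}[N_j/p_j]=m$ for every $j$, hence ``equal expected weight, hence uniform''; but the resampling probability is the expectation of a \emph{ratio}, not a ratio of expectations, so one genuinely needs the almost-sure convergence of both $N_j/m$ and the normalizer, i.e.\ the law of large numbers, to make it rigorous. The remaining points are routine: ``$|\mathcal{J}|\to\infty$'' must force the pool size $m\to\infty$ (a fixed finite pool cannot resample to something exactly uniform), and the weights are bounded since the support is finite and every $p_j>0$, so $1/p_j\le\max_l 1/p_l<\infty$, which justifies the bounded-convergence step; also $P'$ as written is unnormalized, but dividing by $\sum_j 1/p_j$ changes nothing because that constant cancels in the ratio. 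Finally, I would remark that repeating the same computation with weight $\exp(s(u,i))/p_i$ in place of $1/p_i$ yields the softmax distribution, which is the statement actually used to justify $\rchi$IR.
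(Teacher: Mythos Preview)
Your proof is correct and takes essentially the same route as the paper: both apply the law of large numbers to the importance-weighted pool drawn from $P$ to show that reweighting by $1/p_i$ yields the uniform distribution in the limit. The paper's appendix version phrases the target as the self-normalized empirical average $\tfrac{1}{|\mathcal{J}|}\sum_{i\in\mathcal{J}}\mathbb{I}[i\in\mathcal{A}]/p_i$ and is less explicit than you are about handling the ratio (your care with almost-sure convergence of numerator and denominator plus bounded convergence is in fact cleaner), but the underlying argument is identical.
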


\begin{theorem}
    \label{theorem:importance_resampling}
    When the batch size $\vert B \vert \rightarrow \infty$, items in the resampled set $\mathcal{R}_u$ with respect to the query $u$ based on sampling weight $w(i|u)$ follow the softmax distribution w.r.t. the query $u$. 
\end{theorem}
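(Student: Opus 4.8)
The plan is to recognize the resampling weight $w(i|u)$ as the sampling--importance--resampling (SIR) weight that uses the popularity distribution $pop$ as the proposal and the softmax $P(\cdot\,|u)$ as the target, and then to establish SIR consistency directly for our finite corpus. Writing $g_u(i) := \exp(s(u,i))/pop(i)$, one has $w(i|u) = \frac{g_u(i)}{\sum_{j\in B} g_u(j)}$ with $g_u(i)\propto P(i|u)/pop(i)$, i.e.\ the ratio of target to proposal mass --- exactly the configuration in which SIR is known to be consistent. First I would make precise the informal premise ``items within the mini-batch follow the popularity distribution'' by modelling $B=\{X_1,\dots,X_m\}$, $m=|B|$, as i.i.d.\ draws from $pop$ (assumed to have full support, so that $g_u$ is well defined), fix an arbitrary candidate item $k\in\mathcal{I}$, and write the probability that a resampled element equals $k$ as
\[
\pi_m(k) \;=\; \mathbb{E}\!\left[\frac{g_u(k)\cdot \tfrac{1}{m}\sum_{t=1}^m \mathbf{1}[X_t=k]}{\tfrac{1}{m}\sum_{t=1}^m g_u(X_t)}\right].
\]

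Next I would let $m\to\infty$ and apply the strong law of large numbers to numerator and denominator separately: $\tfrac1m\sum_t \mathbf{1}[X_t=k]\to pop(k)$ almost surely, and $\tfrac1m\sum_t g_u(X_t)\to \mathbb{E}_{pop}[g_u] = \sum_{j\in\mathcal{I}} pop(j)\,\exp(s(u,j))/pop(j) = \sum_{j\in\mathcal{I}}\exp(s(u,j)) =: Z_u$, which is finite and strictly positive because $N<\infty$. Hence the integrand converges almost surely to $g_u(k)\,pop(k)/Z_u = \exp(s(u,k))/Z_u = P(k|u)$, the softmax probability w.r.t.\ $u$. Since the integrand lies in $[0,1]$ for every $m$ (the quantity $g_u(k)\mathbf{1}[X_t=k]$ is one of the summands forming the denominator), the bounded convergence theorem lets me pass the limit inside the expectation, giving $\pi_m(k)\to P(k|u)$; summing over $k$ checks the limit is a bona fide distribution, so every element deposited into $\mathcal{R}_u$ is, in the large-batch limit, a softmax draw, which is the assertion. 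The preceding Lemma is the special case $s(u,\cdot)\equiv\mathrm{const}$; one could instead bootstrap from it, but this direct computation is shorter.

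I expect the only delicate step to be the interchange of limit and expectation: the SLLN alone controls the ratio of sums only almost surely, not in expectation, and the ratio is not continuous at the (probability-zero) event where the denominator vanishes. The remedy is the uniform $[0,1]$ bound on the resampling weight, after which bounded convergence applies with no extra work. A secondary point I would flag explicitly is the standing assumption that $pop$ has full support and that $N$ is finite --- both needed for $g_u$ and $Z_u$ to be well defined --- and that relaxing finiteness of $N$ would additionally require $Z_u<\infty$ as a hypothesis.
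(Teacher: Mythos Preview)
Your argument is correct and is, at its core, the same SIR-consistency idea the paper uses: the resampling weight is the target-to-proposal ratio, and a law of large numbers on the normalizer turns the batch-level weight into the global softmax probability as $|B|\to\infty$. The paper carries this out heuristically via the multiplication rule $P(i\mid i\in\mathcal{R}_u)=P(i\mid i\in B)\cdot w(i|u)$ and then invokes Lemma~3.1 to replace $\sum_{j\in B}\exp(s(u,j))/pop(j)$ by (a scaling of) $\sum_{j\in\mathcal{I}}\exp(s(u,j))$. Your route differs in execution rather than substance: you model the batch explicitly as i.i.d.\ $pop$-draws, apply the SLLN separately to the numerator and denominator of $\pi_m(k)$, and --- importantly --- justify the interchange of limit and expectation via the uniform $[0,1]$ bound and bounded convergence. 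That last step is exactly the point the paper's ``$\approx$'' sweeps under the rug, and your version also makes the full-support and finite-$N$ hypotheses explicit. As you note, Lemma~3.1 is then a special case rather than a prerequisite, so your proof is self-contained where the paper's leans on the lemma.
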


\begin{proof}
    Assume $pop(i)$ denote the item frequency over whole item corpus, the probability of appearing in $\mathcal{R}_u$ can be obtained according to the multiplication rule in the following way:
    \begin{displaymath}
        \begin{split}
            P(i | i\in \mathcal{R}_u) & = P(i | i \in B) \cdot w(i |u )\\
            & = pop(i) \times \frac{\vert B \vert}{ \vert \mathcal{D} \vert} \times  \frac{\exp(s(u,i) - \log pop(i) )}{\sum_{j\in B} \exp(s(u,j) - \log pop(j))} \\
            & = \frac{\vert B \vert}{ \vert \mathcal{D} \vert} \times \frac{\exp(s(u,i))}{\sum_{j\in B} \frac{1}{pop(i)} \cdot \exp(s(u,j))} \\
            & \approx  \frac{\vert B \vert}{ \vert \mathcal{D} \vert} \times \frac{\exp(s(u,i))}{ \frac{\vert B \vert}{ \vert \mathcal{D} \vert} \times \sum_{j\in \mathcal{I}} \exp(s(u,j))}  = \frac{\exp(s(u,i))}{\sum_{j\in \mathcal{I}} \exp(s(u,j) )}
        \end{split}
    \end{displaymath}

    The approximately equal sign $\approx$ turns to the equal sign when $\vert B \vert \rightarrow \infty$.
\end{proof}

According to the resampling approach, the sampled items follows the softmax distribution given the user query. Let's first review the expectation of gradient of the full softmax loss as follows:
\begin{displaymath}
    \nabla \mathcal{L}_{\text{softmax}} = - \nabla s(u,i) + \sum_j^N \nabla s(u,j) = - \nabla s(u,i) + \mathbb{E}_{j \sim P^*(i|u)} \nabla s(u,j)
\end{displaymath}
where $P^*(i|u) = \frac{ \exp (s(u,i))}{\sum_{j \in \mathcal{I} }\exp(s(u,j)) }$. 
Since the sampled items follows the softmax distribution, the estimated gradient with the item set $\mathcal{R}_u$ achieves an asymptotic-unbiased approximation of the expectation of gradient, i.e., $ \sum_{i\in \mathcal{R}_u} \nabla s(u,i) \approx \mathbb{E}_{j \sim P^*(i|u)} \nabla s(u,j) $.
Therefore, the objective function in a mini-batch has the following form:
\begin{equation}
    \mathcal{L}_{\text{BIR}}(\mathcal{D}, \Theta) = \sum_{ (u,i) \in B} \frac{\exp \left( s(u,i) \right)}{ \sum_{j \in \mathcal{R}_u}\exp \left(  s(u,j) \right)}
    \label{eq:obejective_func}
\end{equation}
where $\mathcal{R}_u$ denotes the resampled item set with the same size as $B$.
The Algorithm~\ref{alg:BIR} sketches the pseudo-code for inbatch importance resampling. Compared with the naive inbatch sampling, BIR has the capability of distinguishing negative samples among different queries within inbatch items and capturing the dynamical changes of softmax, yielding an adaptive and flexible sampler to strengthen recommender retrievers. Note that the asymptotic-unbiased estimation is achieved when the batch size becomes extremely larger. A non-asymptotic bound can be attached with the following theorem.

\begin{algorithm}[th]
    \caption{In\underline{b}atch \underline{i}mportance \underline{r}esampling (BIR)}
    \label{alg:BIR}
    \LinesNumbered
    \KwIn{Training data $\mathcal{D}=\{(u,i)\}$, Number of epochs $T$}
    \KwOut{Model parameters $\Theta$}
    Statistically get item popularity $P = \{pop(i) | \forall i \in \mathcal{I} \}$ from $\mathcal{D}$\;
    \For(){$e=1,2,..., T$}{
        \For(){
            mini-batch $B\in \mathcal{D}$
        }{
            $U = \{ u | (u,i) \in B\}$, $I = \{ i | (u,i) \in B\}$\;
            Map query feature vectors into embeddings $E_U \in \mathbb{R}^{B \times k}$ based on $\phi_Q$\;
            Map item feature vectors into embeddings $E_I \in \mathbb{R}^{B \times k}$ based on $\phi_I$\;
            Calculate the matrix of similarity $S = E_U E_I^{\top}$\;
            \For{$u \in U$}{
                Calculate the sampling probability $w(i|u) = \frac{ \exp( s(u,i) - \log pop(i) )}{\sum_{j \in I} \exp(s(u,j) - \log pop(j) )}$, $\forall i \in I$\; \label{alg_line:weight}
                Resample a set of items $\mathcal{R}_u$ based on $w(i|u)$\; \label{alg_line:resample}
            }
            Update model parameters based on the objective function, i.e., Eq~\eqref{eq:obejective_func}\;
        }
    }
\end{algorithm}

\begin{theorem}
    Assum that the gradients of the logits $\nabla_{\Theta} s(u,i)$ have their coordinates bounded by $G$, the bias of $\nabla_{\Theta} \mathcal{L}_{\text{BIR}}$ satisfies:
    \begin{displaymath}
        \begin{split}
          \mathbb{E}[\nabla_{\Theta} \mathcal{L}_{\text{BIR}}(u)] \leq \nabla_{\Theta} \mathcal{L}_{\text{softmax}}(u) + \mathbb{E}[U] \cdot \left( \frac{\sum_{i\in B} e^{s(u,i)} p(i) Z_{Bp} - Z_B^2}{|B| Z^3} + o\left( \frac{1}{|B|} \right)\right) + \\
            \left( \frac{2G}{|B|} \cdot \frac{\max_{j,l \in B} \vert  p(j) - p(l)  \vert  Z_{Bp} \cdot Z_B}{Z^2 + \sum_{i\in B} e^{s(u,i)} p(i) Z_{Bp}}  + o\left( \frac{1}{|B|} \right) \right) \cdot \bm{1}
        \end{split}
        \end{displaymath}
    where $\mathbb{E}[U] = \sum_{i\in \mathcal{I}} \exp \left({s(u,i)} \right) \nabla_{\Theta} s(u,i)$, $Z_{Bp} =  \sum_{j\in B} \exp\left( {s(u,j) - \log p(j)} \right) $, $Z_B = \sum_{j \in B} \exp \left( {s(u,j)} \right)$, $Z = \sum_{i\in \mathcal{I}} \exp\left( {s(u,i)} \right)$. The probability $p(i)$ refers to the popularity $pop(i)$. 
\end{theorem}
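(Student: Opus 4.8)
The plan is to isolate the single source of bias and then quantify it with a second-order Taylor expansion (the delta method) applied to a ratio estimator. First, I would note that the positive part of the gradient is handled exactly: writing the per-query BIR loss as the sampled cross-entropy over the resampled set $\mathcal{R}_u$, its gradient is $\nabla_{\Theta}\mathcal{L}_{\text{BIR}}(u)=-\nabla_{\Theta}s(u,i)+\hat{g}(u)$, where $\hat{g}(u)$ denotes the in-batch importance-resampling estimate of the ``negative'' term $g(u):=\mathbb{E}_{j\sim P^{*}(\cdot|u)}\nabla_{\Theta}s(u,j)=\mathbb{E}[U]/Z$. Since $\nabla_{\Theta}\mathcal{L}_{\text{softmax}}(u)=-\nabla_{\Theta}s(u,i)+g(u)$, the positive terms cancel and it suffices to bound the estimator bias $\mathbb{E}[\hat{g}(u)]-g(u)$.

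Second, I would express $\hat{g}(u)$ as a ratio $X/Y$ with $X=\sum_{j\in B}\frac{e^{s(u,j)}}{p(j)}\nabla_{\Theta}s(u,j)$ and $Y=Z_{Bp}=\sum_{j\in B}\frac{e^{s(u,j)}}{p(j)}$, treating the items of $B$ as (near-)i.i.d.\ draws from $pop$. By Theorem~\ref{theorem:importance_resampling} (the finite-sample analogue of the $1/p_i$-cancellation behind the preceding lemma) one has $\mathbb{E}[X]/\mathbb{E}[Y]=\mathbb{E}[U]/Z=g(u)$ exactly, so the whole bias is the ratio-estimator bias. Applying the second-order delta method,
\[
\mathbb{E}\!\left[\frac{X}{Y}\right]=\frac{\mathbb{E}[X]}{\mathbb{E}[Y]}-\frac{\mathrm{Cov}(X,Y)}{(\mathbb{E}[Y])^{2}}+\frac{\mathbb{E}[X]\,\mathrm{Var}(Y)}{(\mathbb{E}[Y])^{3}}+o\!\left(\tfrac{1}{|B|}\right),
\]
and since $X$ and $Y$ are sums of $|B|$ i.i.d.\ contributions, $\mathrm{Var}(Y)$ and $\mathrm{Cov}(X,Y)$ both scale like $|B|$, which produces the explicit $1/|B|$ after dividing by $(\mathbb{E}[Y])^{2}=|B|^{2}Z^{2}$ or $(\mathbb{E}[Y])^{3}=|B|^{3}Z^{3}$. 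The $\mathrm{Var}(Y)$-piece is colinear with $\mathbb{E}[X]=|B|\,\mathbb{E}[U]$ and, after replacing the population moments by their mini-batch surrogates $Z_{B}$, $Z_{Bp}$ and $\sum_{i\in B}e^{s(u,i)}p(i)$ (an exchange that costs a further $o(1/|B|)$), yields exactly the $\mathbb{E}[U]\cdot(\cdots)$ term in the statement. I would not evaluate the residual $\mathrm{Cov}$-piece explicitly; it is bounded together with the finite-batch discrepancy below.

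Third, for the second bracket I would use the coordinate bound $\lvert\nabla_{\Theta}s(u,i)\rvert\le G$. The leftover error vector is, coordinate-wise, a $pop$-reweighted average of differences $\nabla_{\Theta}s(u,j)-\nabla_{\Theta}s(u,l)$, hence bounded in absolute value by $2G$ times a scalar measuring how far the finite-batch resampling weights $w(\cdot|u)$ sit from the ideal $P^{*}(\cdot|u)$; this discrepancy is zero when all popularities in the batch coincide and is otherwise controlled by $\max_{j,l\in B}\lvert p(j)-p(l)\rvert$, with the normalizers $Z_{Bp}$, $Z_{B}$, $Z$ and $\sum_{i\in B}e^{s(u,i)}p(i)$ arising from the self-normalization. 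Packaging this as (scalar)$\cdot\bm{1}$ gives the ``$\le$'' in the statement (the first two terms being equalities up to $o(1/|B|)$, the last a coordinate-wise upper bound), and the $o(1/|B|)$ inside the bracket absorbs higher-order terms of the same expansion.

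The main obstacle is making the delta-method remainders genuinely $o(1/|B|)$, which requires moment control on the importance weights $e^{s(u,j)}/p(j)$ and on the negative moments of the partition estimate $Y$: I would obtain the former from $G<\infty$ together with $\inf_{i}p(i)>0$, and the latter from a uniform positive lower bound on $Y$ (e.g.\ $Y\ge Z_{B}/\max_{l}p(l)$, or the a.s.\ limit $Y/|B|\to Z$), after which Taylor's theorem with Lagrange remainder plus dominated convergence close the argument. A secondary subtlety is the bookkeeping of which population expectation each mini-batch sum stands in for, and reconciling the self-normalized form of $\nabla_{\Theta}\mathcal{L}_{\text{BIR}}$ with the plain sample-average heuristic stated just before the theorem, since the two give ratio estimators whose variances differ only by lower-order terms.
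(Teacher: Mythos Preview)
Your high-level plan—write the negative-gradient piece as a self-normalised ratio and control its bias by a second-order (delta-method) expansion, with the $G$-bound absorbing the residual—is exactly the Rawat et al.\ template the paper follows. But your implementation places the randomness in the wrong layer, and that is a genuine gap rather than just a stylistic difference.

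In the paper the mini-batch $B$ is held \emph{fixed}; the only randomness is the resampling $o_1,\dots,o_m\stackrel{\text{i.i.d.}}{\sim}w(\cdot\mid u)$ from $B$. The ratio analysed is $U/V$ with $V=e^{s(u,i)}+\tfrac{1}{m}\sum_{j}e^{s(u,o_j)}/w(o_j\mid u)$. A conditioning-on-$m{-}1$-samples step (their Lemma~A.1) gives $\mathbb{E}[U/V]\le \mathbb{E}[U]\,\mathbb{E}[1/V]+\Delta_m$; then $\mathbb{E}[1/V]\le 1/\mathbb{E}[V]+\mathrm{Var}(V)/\mathbb{E}[V]^{3}+o(1/m)$, and $\Delta_m$ is bounded using $G$ together with $\max_{j,l\in B}\bigl|e^{s(u,j)}/w(j\mid u)-e^{s(u,l)}/w(l\mid u)\bigr|$. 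The algebraic identity $e^{s(u,j)}/w(j\mid u)=p(j)\,Z_{Bp}$ then turns this maximum into $Z_{Bp}\max_{j,l\in B}|p(j)-p(l)|$ and turns $\sum_{k\in B}e^{2s(u,k)}/w(k\mid u)$ into $Z_{Bp}\sum_{k\in B}e^{s(u,k)}p(k)$; every batch-indexed constant in the statement falls out of this substitution, not from any asymptotic replacement.

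Your $X/Y=\bigl(\sum_{j\in B}\tfrac{e^{s(u,j)}}{p(j)}\nabla s(u,j)\bigr)\big/Z_{Bp}$ is, by contrast, the \emph{SSL-Pop} gradient (popularity-corrected sampled softmax over $B$, no resampling), and you take the randomness to be $B\sim pop$. That is a different estimator from the one BIR computes, and because for you $Z_B$, $Z_{Bp}$ and $\max_{j,l\in B}|p(j)-p(l)|$ are random rather than fixed parameters, your delta-method moments are population sums over $\mathcal{I}$, not batch sums over $B$. The step ``replace population moments by mini-batch surrogates at cost $o(1/|B|)$'' therefore fails: that exchange is $O(|B|^{-1/2})$, larger than the $1/|B|$ bias you are isolating. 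Likewise, your $\mathrm{Cov}$-based residual has no mechanism to produce the factor $\max_{j,l\in B}|p(j)-p(l)|$; in the paper it is pure algebra from $e^{s(u,j)}/w(j\mid u)=p(j)Z_{Bp}$, not a heuristic about when the discrepancy vanishes. To recover the stated bound you need to move the expectation to the resampling layer and exploit that identity explicitly.
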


This theorem can be proved according to Rawat's work~\cite{rawat2019sampled} and more details can be attached in Appendix. We can observe that the bias is correlated to the batch size and the popularity distribution within the mini-batch.

\subsection{Reusing Historical Informative Items with Cache-Augmented Resampling}
Increasing the sample size is another crucial approach to controlling the bias and variance. Inspired by the recent work, MNS\cite{yang2020mixed}, which uses a mixture of inbatch negative samples and additional uniformly sampled negatives, we design a paradigm with cache-augmented samples, with the aim of reusing highly informative items in previous training epochs. 

MNS feeds the model with additional uniformly sampled items, alleviating the problem of long-tailed distribution to some extent. However, both the uniform distribution and popularity-based distribution are static from beginning to end, which may oversample the easy samples after several epochs. 
In order to reduce the bias of the query-based softmax during model training, we intuitively cache the items with averagely higher scores, depending on which historical harder negatives have the chance of being more trained.

More concretely, we count the occurrence number of times the items have been sampled into a vector $\bm{o}=\{o_1, o_2, ..., o_N \}$. If the item has been sampled for more times according to the aforementioned BIR, it has an averagely higher similar score among user queries, indicating that it has a higher probability of being hard negatives. The cache $\mathcal{C}$ is generated and updated by sampling items based on $o_i$, which serves as an additional candidate set to be resampled and augments the items from the original batch. Since the cache-augmented items follow a different distribution from the inbatch items, they can not share a unified normalization function. Thus, the items are resampled independently from the cache and the batch, and fit in two parts of loss as follows:
\begin{equation}
    \mathcal{L}_{\rchi \text{IR}}(\mathcal{D}, \Theta) = \lambda \sum_{(u,i) \in B} \frac{\exp (s(u,i))}{\sum_{j\in \mathcal{K}_u} \exp (s (u,j))} + (1 - \lambda) \sum_{(u,i) \in B} \frac{\exp (s(u,i))}{\sum_{j\in \mathcal{R}_u} \exp (s (u,j))}
    \label{eq:cache_loss}
\end{equation}
where $\lambda$ denotes a hyperparameter to control the influence of two separate parts. $\mathcal{K}_u$ denotes the item set resampled from the cache $\mathcal{C}$ based on the weight of $w_c = \frac{ \exp (s(u,i) - \log q(i))}{ \sum_{c \in \mathcal{C}} \exp (s(u,c) - \log q(c))} $, where $q(i) = pop(i)$. Note that the newly seen items are affected by a long-tailed distribution, we, therefore, correct the sampling bias for computing the normalized resampling weight. $\mathcal{R}_u$ refers to the resampled item set according to the inbatch importance resampling. In order to maintain a final resampled set with size $\vert B \vert$, we simply resample $\frac{1}{2}$ items from the cache and batch respectively, i.e., $\vert \mathcal{K}_u \vert = \vert \mathcal{R}_u \vert = \frac{ \vert B \vert }{2}$. When newly resampled items are generated, the occurrence of items in the resampled set is updated. The cache-augmented item set offers more historical informative samples for training retrievers, being beneficial to getting rid of the sampling bias caused by the exposure bias.
The Algorithm~\ref{alg:XIR} details the procedure of the cache-augmented inbatch importance resampling.

\begin{algorithm}[th]
    \caption{\underline{C}ac\underline{h}e-Augmented \underline{I}nbatch \underline{I}mportance \underline{R}esampling ($\rchi$IR)}
    \label{alg:XIR}
    \LinesNumbered
    \KwIn{Training data $\mathcal{D}=\{(u,i)\}$, Number of epochs $T$, Cache size $C$, Hyperparameter $\lambda$}
    \KwOut{Model parameters $\Theta$}
    Initialize the occurrence vector $\bm{o} = \{0\}_{i=1}^{N}$ and the cache $\mathcal{C}$ with uniformly sampled items\;
    Statistically get item popularity $\mathcal{P} = \{pop(i) | \forall i \in \mathcal{I} \}$ from $\mathcal{D}$\;
    \For(){$e=1,2,..., T$}{
        \For(){
            mini-batch $B\in \mathcal{D}$
        }{
            $U = \{ u | (u,i) \in B\}$, $I = \{ i | (u,i) \in B\}$\;
            Map user feature vectors into embeddings $E_U \in \mathbb{R}^{B \times k}$ based on $\phi_Q$\;
            Map item feature vectors into embeddings $E_I \in \mathbb{R}^{B \times k}$ based on $\phi_I$\;
            Map feature vectors for items in the cache $\mathcal{C}$ into embeddings $E_c \in \mathbb{R}^{C \times k}$ based on $\phi_I$\; \label{alg2_line:encode_cache}
            Calculate the similarity matrix $S = E_U E_I^{\top}$ and $S'= E_u E_c^{\top}$\;
            \For{$u \in U$}{
                Resample a set of items $\mathcal{R}_u$ based on $w(i|u) = \frac{ \exp( s(u,i) - \log pop(i) )}{\sum_{j \in I} \exp(s(u,j) - \log pop(j)) }, \forall i \in I$\;
                Resample a set of items $\mathcal{K}_u$ based on $w_c(i|u) = \frac{\exp (s'(u,i) - \log q(i))}{ \sum_{k \in \mathcal{C}} \exp(s'(u,k) - \log q(k) )} , \forall i \in \mathcal{C}$ \;
            }

            \tcp{Update the cache and occurrence vector}
            $ \mathcal{O} = \{ i | i \in \mathcal{K}_u \cup \mathcal{R}_u , \forall u \in U  \}$ , $o_i = o_i + \sum_{j \in \mathcal{O}} \mathbb{I}[j=i] $ \; \label{alg2_line:cache_occur}
            Sample $C$ items based on $w = o_i$ to update the cache $\mathcal{C}$\; \label{alg2_line:update_cache}
            Update model parameters based on the objective function, i.e., Eq~\eqref{eq:cache_loss}\;
        }
    }
\end{algorithm}

\subsection{Complexity Analysis}
Compared with the inbatch sampling, BIR introduces additional computational cost for each training mini-batch of calculating the resampling weight (Line~\ref{alg_line:weight} in Alg~\ref{alg:BIR}) and resampling items according to the multinomial distribution (Line~\ref{alg_line:resample} in Alg~\ref{alg:BIR}), which has a time complexity of $\mathcal{O}( \vert B \vert \times \vert B \vert )$. Such time complexity is acceptable compared with the computational cost of embedding functions ($\mathcal{O}( \phi_Q)$ and $\mathcal{O}(\phi_I)$) and similarity matrix computation $\mathcal{O}( \vert B \vert \times \vert B \vert \times k)$. Regarding $\rchi$IR, it feeds the items of the cache into the item tower (Line~\ref{alg2_line:encode_cache} in Alg~\ref{alg:XIR}), taking the additional cost of embedding them. The time complexity of resampling is the same as that of BIR. Updating the cache and occurrences (Line~\ref{alg2_line:cache_occur}-\ref{alg2_line:update_cache} in Alg~\ref{alg:XIR}) takes $\mathcal{O}(N)$ time complexity, where $N$ is the number of items.

\section{Experiments} \label{sec:exp}
The proposed BIR and $\rchi$IR are then evaluated on five public real-world datasets in recommender retrievers to answer the following questions: (1) Can the inbatch importance resampling outperforms competitive strategies in inbatch sampling? (2) How do the cache-augmented items improve the performance of recommender retrievers?

\subsection{Experimental Settings}
\subsubsection{Dataset}
The experiments are conducted on the \textbf{Gowalla}, \textbf{Amazon}, \textbf{Ta-feng}, \textbf{Echonest} and \textbf{Tmall} datasets \footnote{Datasets: https://recbole.io/cn/dataset\_list.html}, as summarized in Table~\ref{tab:dataset}. The rated books in Amazon dataset with higher ratings than average ratings are considered as positive ones, while the interactions in other datasets are regarded as positive ones. 80\% of items for a user are sampled for training and the other 20\% of items constitute the test set. The performance of recommender retrievers is evaluated on the NDCG (Normalization Discounted Cumulative Gain) and RECALL at a cutoff of 10 by default.

\begin{table}[ht]
    \centering
    \caption{Statistics of Datasets}
    \vspace{-0.2cm}
    \begin{tabular}{c|r|r|r|c}
        \toprule
        Dataset  & \#User  & \#Item  & \#Interaction & Density \\
        \midrule
        Gowalla  & 29,858  & 40,988  & 1,027,464     & 8.40e-4 \\
        Amazon   & 130,380 & 128,939 & 2,415,650     & 1.44e-4 \\
        Ta-feng  & 19,451  & 10,480  & 630,767       & 3.09e-4 \\
        Echonest & 217,966 & 60,654  & 4,422,471     & 3.35e-4 \\
        Tmall    & 125,553 & 58,058  & 2,064,290     & 2.83e-4 \\
        \bottomrule
    \end{tabular}
    \label{tab:dataset}
\end{table}

\subsubsection{Baselines}
In this work, we investigate the inbatch sampling for retrievers and we compare the proposed BIR and $\rchi$IR with competitive inbatch-sampling-based methods. All the methods attempt to minimize the log sampled softmax loss, varying in the selected negatives and the proposal distributions. (1) \textbf{SSL} utilizes the inbatch samples with the uniform proposal. (2) \textbf{SSL-Pop} uses the inbatch items and adopts the item popularity as the proposal distribution. (3) \textbf{MNS}~\cite{yang2020mixed} refers to the mixed negative sampling, which uses a mixture of batch and uniformly sampled negatives. We set the number of uniform samples $B'$ to be the same as the batch size. (4) \textbf{G-Tower}~\cite{yi2019sampling} corrects the selection bias in batch sampling by streaming frequency estimation. It records the last number of steps for each item in fixed-length hash arrays and reduces hash collision errors by using multiple hash arrays. In this way, the frequency is estimated as the reciprocal of the average number of steps. We use 5 different hash arrays to estimate. More implementation details can be 
attached in Appendix.

\subsection{Performance Comparisons with Baseline Methods}
Each method is run for 5 times with different random seeds and we report the average scores with their standard deviations in Table~\ref{tab:baseline_performances}. The proposed BIR and $\rchi$IR considerably outperform the competitive baselines. BIR achieves a relative 2.47\%, 6.57\%, 6.58\%, 2.04\% and 5.17\% improvements of NDCG@10 on the five datasets respectively while the relative improvements for $\rchi$IR are 3.81\%, 12.23\%, 15.24\%, 11.73\% and 17.12\%. Both BIR and $\rchi$IR show superior performances, indicating the efficient and preferable utilization of the inbatch samples. In addition to correcting the selection bias by the item popularity, BIR and $\rchi$IR can dynamically select more informative negative samples for different queries within the mini-batch, contributing to better model convergence. According to the importance resampling paradigm, the items with higher similarity scores with the corrected sampling bias have a higher probability of being resampled, where a more accurate estimation of the query-dependent softmax is achieved. In addition, we have the following findings.

\begin{table}[th]
    \centering
    \caption{Comparison with Baselines. $\delta=1e-4$}
    \vspace{-0.2cm}
    \begin{tabular}{c|c|c|c|c|c}
    \toprule
                     & Gowalla                & Amazon                 & Ta-feng                & Echonest                & Tmall                  \\ \midrule
    Method           & NDCG@10                & NDCG@10                & NDCG@10                & NDCG@10                 & NDCG@10                \\ \midrule
    SSL              & 0.1381$\pm$5.5$\delta$ & 0.0718$\pm$3.7$\delta$ & 0.0437$\pm$9.4$\delta$ & 0.1436$\pm$3.1$\delta$  & 0.0340$\pm$3.1$\delta$ \\

    SSL-Pop          & 0.1479$\pm$9.1$\delta$ & 0.0764$\pm$3.2$\delta$ & 0.0625$\pm$1.5$\delta$ & 0.1609$\pm$1.9$\delta$  & 0.0547$\pm$5.6$\delta$ \\

    MNS              & 0.1486$\pm$8.9$\delta$ & 0.0781$\pm$4.9$\delta$ & 0.0634$\pm$1.4$\delta$ & 0.1648$\pm$2.5$\delta$  & 0.0561$\pm$4.9$\delta$ \\

    G-Tower     & 0.1500$\pm$8.6$\delta$ & 0.0764$\pm$4.0$\delta$ & 0.0496$\pm$8.7$\delta$ & 0.1600$\pm$5.1$\delta$  & 0.0429$\pm$2.1$\delta$ \\

    \midrule
    BIR               & 0.1523$\pm$7.9$\delta$ & 0.0833$\pm$2.7$\delta$ & 0.0675$\pm$1.3$\delta$ & 0.1682$\pm$3.4$\delta$  & 0.0590$\pm$3.2$\delta$ \\

    $X$IR            & 0.1543$\pm$1.1$\delta$ & 0.0877$\pm$3.4$\delta$ & 0.0730$\pm$1.1$\delta$ & 0.1842$\pm$2.4$\delta$  & 0.0658$\pm$2.2$\delta$ \\
    \midrule \midrule
                     & RECALL@10              & RECALL@10              & RECALL@10              & RECALL@10               & RECALL@10              \\ \midrule
    SSL              & 0.1105$\pm$4.1$\delta$ & 0.0753$\pm$4.5$\delta$ & 0.0375$\pm$7.5$\delta$ & 0.1350$\pm$3.6$\delta$  & 0.0354$\pm$3.6$\delta$ \\

    SSL-Pop          & 0.1124$\pm$4.4$\delta$ & 0.0777$\pm$4.3$\delta$ & 0.0506$\pm$9.2$\delta$ & 0.1502$\pm$2.4$\delta$ & 0.0544$\pm$6.1$\delta$ \\
    
    MNS              & 0.1130$\pm$4.2$\delta$ & 0.0796$\pm$6.1$\delta$ & 0.0514$\pm$8.5$\delta$ & 0.1537$\pm$2.9$\delta$  & 0.0558$\pm$5.4$\delta$ \\

    G-Tower     & 0.1176$\pm$6.8$\delta$ & 0.0798$\pm$5.8$\delta$ & 0.0415$\pm$7.2$\delta$ & 0.1504$\pm$5.1$\delta$  & 0.0440$\pm$1.8$\delta$ \\
    \midrule

    BIR         & 0.1157$\pm$4.6$\delta$ & 0.0848$\pm$3.5$\delta$ & 0.0549$\pm$8.6$\delta$ & 0.1568$\pm$3.4$\delta$  & 0.0588$\pm$3.1$\delta$ \\

    $X$IR      & 0.1169$\pm$6.1$\delta$ & 0.0895$\pm$4.1$\delta$ & 0.0589$\pm$6.1$\delta$ & 0.1703$\pm$2.5$\delta$        & 0.0651$\pm$2.8$\delta$ \\
    \bottomrule
    \end{tabular}
    \label{tab:baseline_performances}
\end{table}

\textit{Finding 1:} Reducing the selection bias within the mini-batch by item popularity is beneficial to reducing the approximation bias from the softmax distribution. SSL, which does not adjust the proposal distribution, performs the worst among the baseline methods in terms of both NDCG and RECALL. Thus, such a method raises inconsistency between the real sampling distribution and the proposal distribution, resulting in poor performance of the sampled softmax loss.  

\textit{Finding 2:} Introducing negative items outside of the mini-batch expands the candidate set, facilitating the learning of retrievers. The result, that MNS outperforms SSL-Pop with an average 1.84\% relative improvement of NDCG@10 and $\rchi$IR outperforms BIR with a 7.12\% improvement, demonstrates that the model has a better resolution towards long-tailed items with the additional sampled items.

\textit{Finding 3:} Caching the frequently sampled items helps improve the retriever performance, indicating that the frequently sampled items are more informative than uniformly sampled items. $\rchi$IR significantly improves the performance of BIR, where historical informative items have a tendency to remain in the cache with more training.

Furthermore, we investigate the recommendation performance during training epochs with different inbatch-sampling-based methods to verify the effectiveness of the BIR and $\rchi$IR. The changing curve of NDCG@10 on the Amazon and Tmall datasets are shown in Figure~\ref{fig:epochs_ndcg10}. Both BIR and $\rchi$IR contribute to faster convergence compared to the baselines. The model's NDCG quickly reaches a relatively high level under BIR and $\rchi$IR, particularly on the Amazon and Tmall datasets, implying that more informative items are selected for training.

\begin{figure}
    \begin{minipage}{0.66\textwidth}
        \centering
        \subfigure[Amazon]{\includegraphics[width=0.49\textwidth]{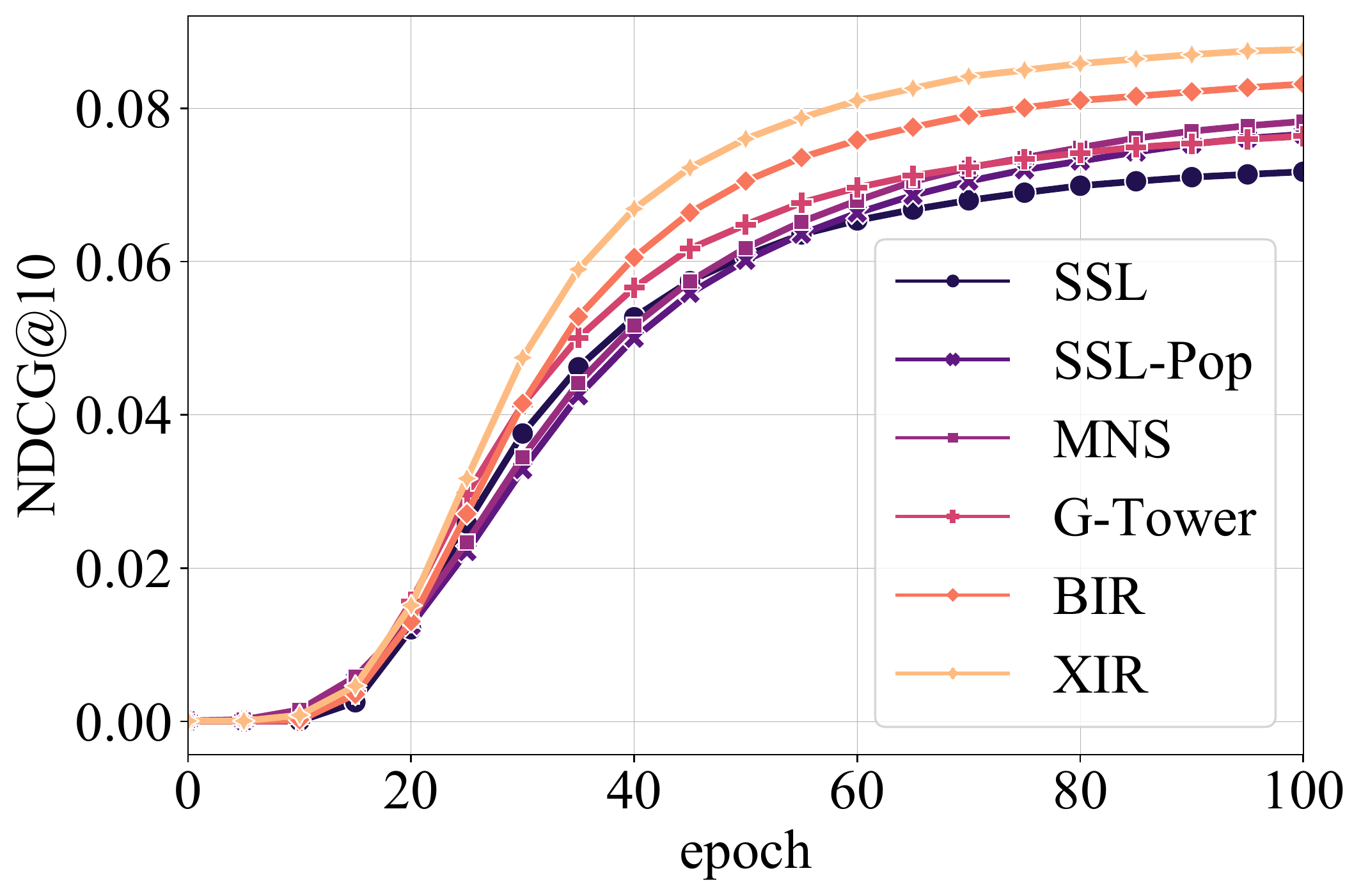}}
        \hspace{-0.25cm}
        \subfigure[Tmall]{\includegraphics[width=0.49\textwidth]{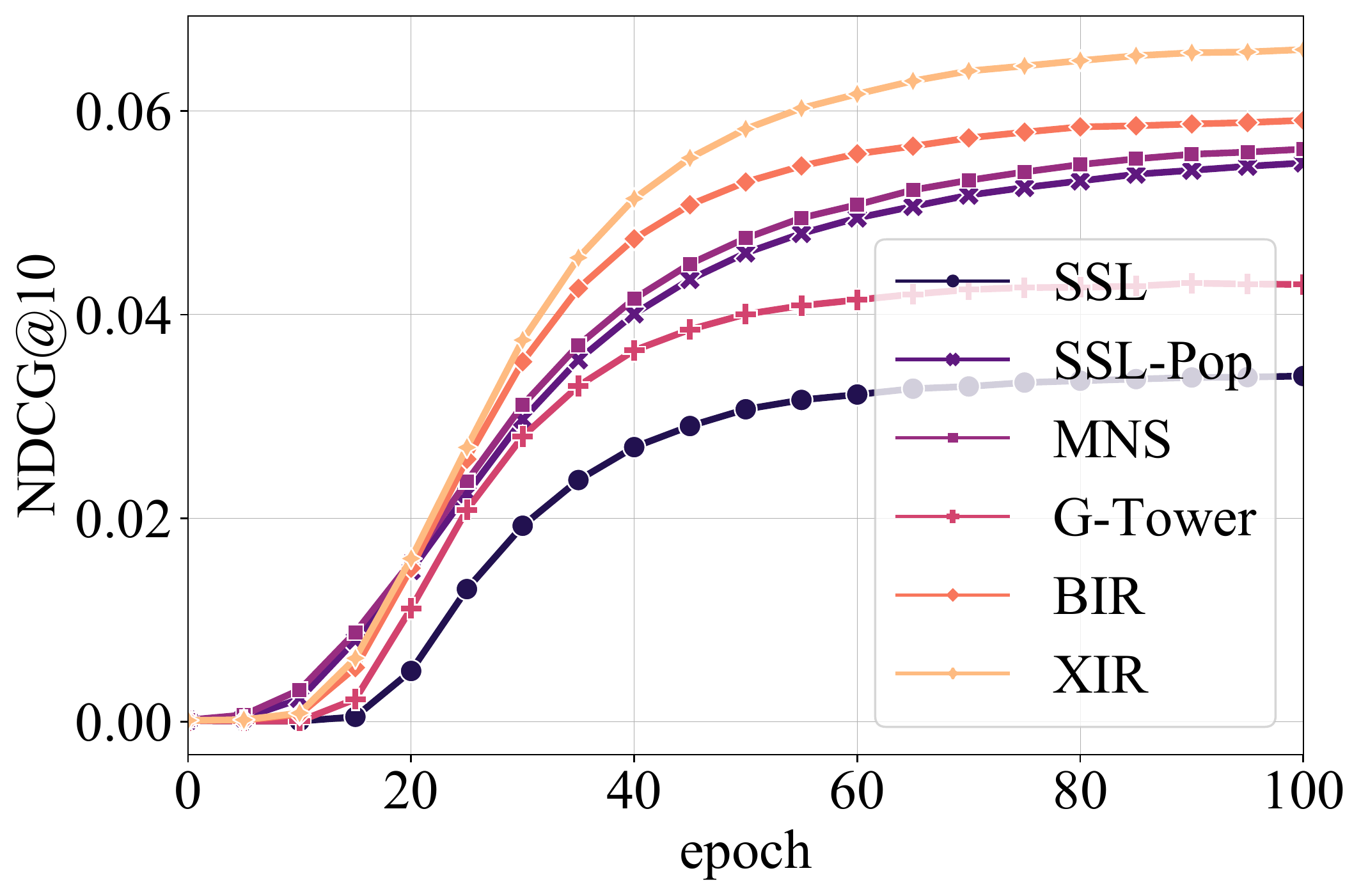}}
        \vspace{-0.3cm}
        \caption{NDCG@10 vs. number of epochs on three datasets}
        \label{fig:epochs_ndcg10}
    \end{minipage}
    \begin{minipage}{0.33\textwidth}
        \centering
        \includegraphics[width=0.85\textwidth]{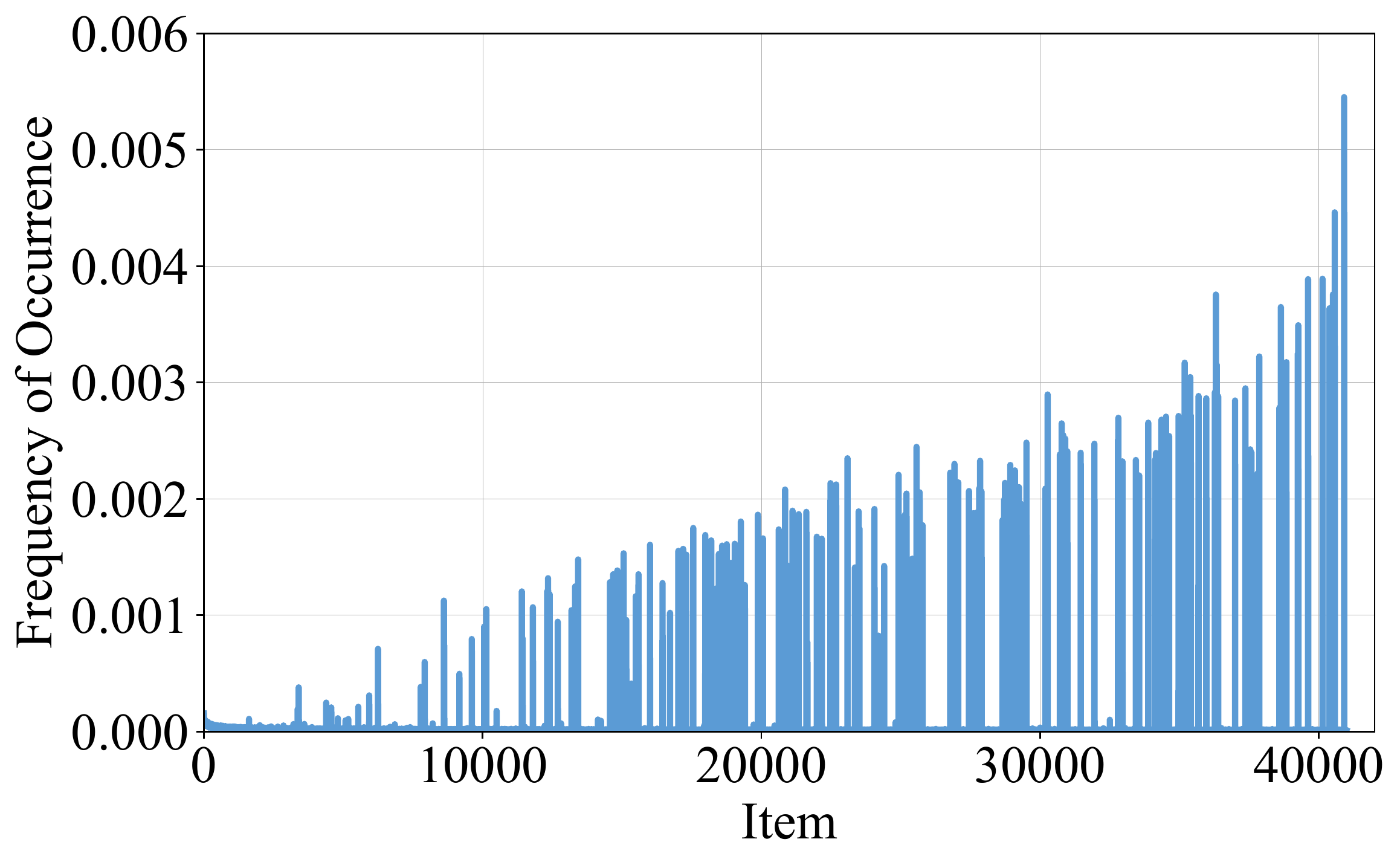}
        \vspace{-0.2cm}
        \caption{Item occurrence. Items are sorted in descending order of popularity.}
        \label{fig:occurrence_cache}
    \end{minipage}
\end{figure}

\subsection{Item Distribution in Cache}
In order to intuitively figure out the item distribution in the cache, experiments are conducted to summarize the occurrence vector, since the cache is updated by sampling based on it. We summarize the occurrence after 100 epochs on the Gowalla dataset and normalize the item occurrence, as shown in Figure~\ref{fig:occurrence_cache}, where items are sorted by their popularity in descending order.

The distribution of items in the cache performs significantly different from the popularity-based distribution. The huge bias from the popularity-based distribution suggests that the cache can alleviate the long-tailed items becoming more popular according to the corrected sampling probability, i.e., the popularity. Moreover, the unpopular items have a greater chance of being sampled, whereas more informative items are encouraged to be more sampled, rather than treating full items as equivalent.

\subsection{Adaptive Query-Dependent Estimation}
Compared with MNS and G-Tower, where the items within the mini-batch are shared between user queries, the proposed BIR selects different samples, being able to approximate the softmax distributions for different queries and capture the dynamic changing of softmax distribution during the training process. In this chapter, we conduct experiments by resampling smaller numbers of items from the mini-batch and compare the performances with SSL-Pop. We report the relative improvements in terms of  NDCG@10 and RECALL@10 in Figure~\ref{fig:diff_resample_NDCG} and \ref{fig:diff_resample_RECALL}.

\begin{figure}
    \centering
    \subfigure[NDCG@10]{
    \includegraphics[width=0.32\columnwidth]{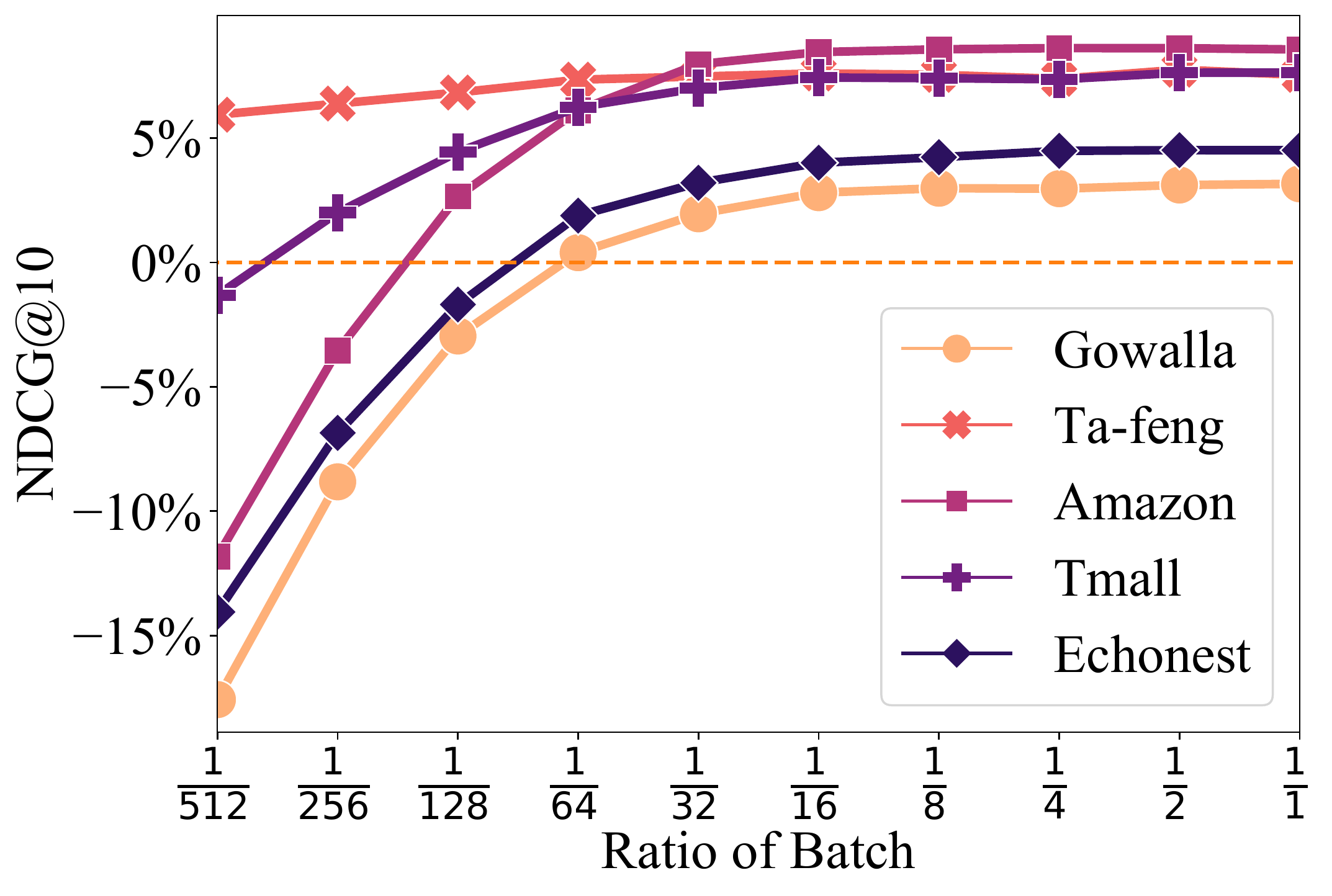}
    \label{fig:diff_resample_NDCG}}
    \hspace{-0.42cm}
    \subfigure[RECALL@10]{
    \includegraphics[width=0.32\columnwidth]{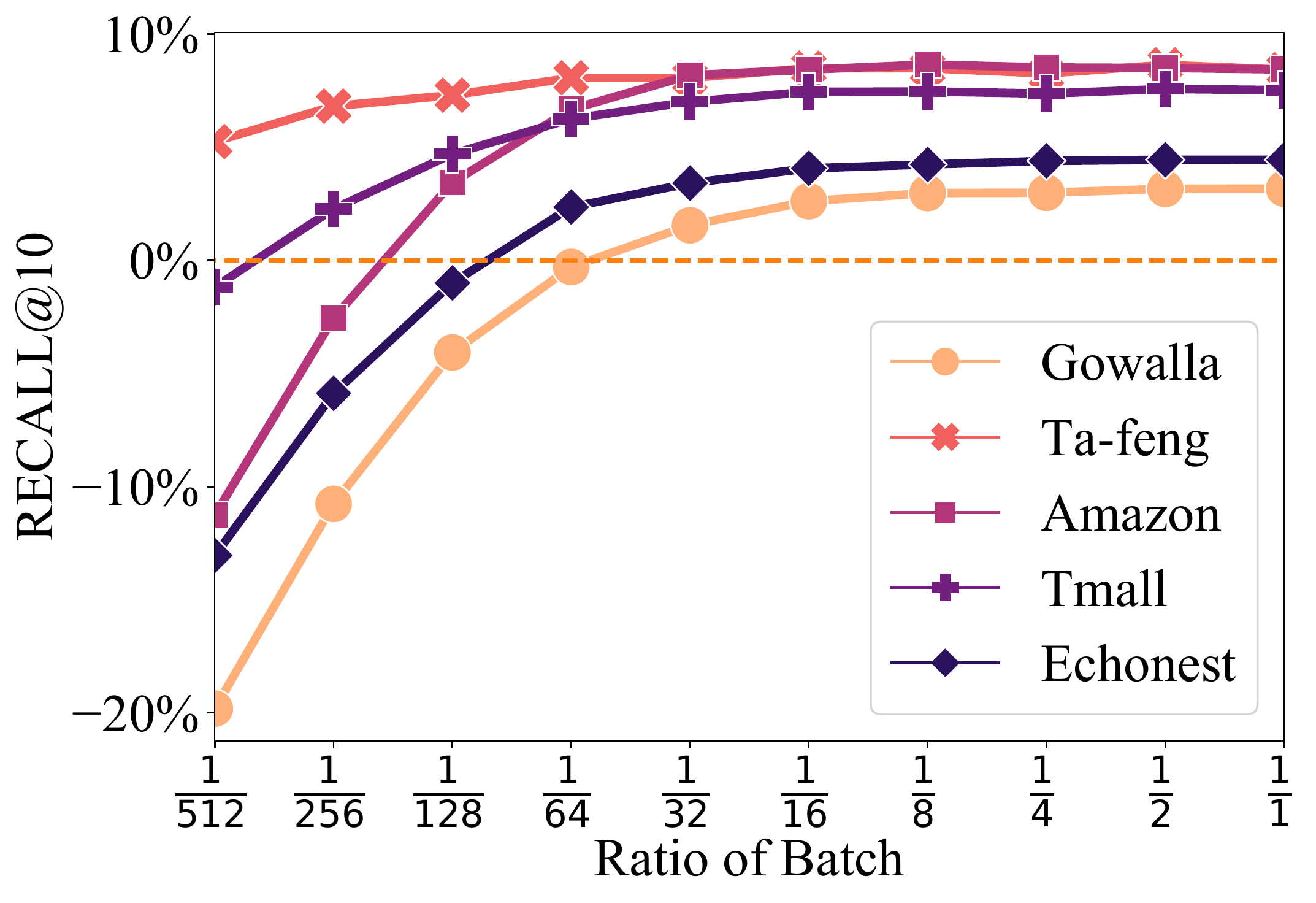}
    \label{fig:diff_resample_RECALL}}
    \hspace{-0.42cm}
    \subfigure[Gradient Summarized Time]{
    \includegraphics[width=0.31\columnwidth]{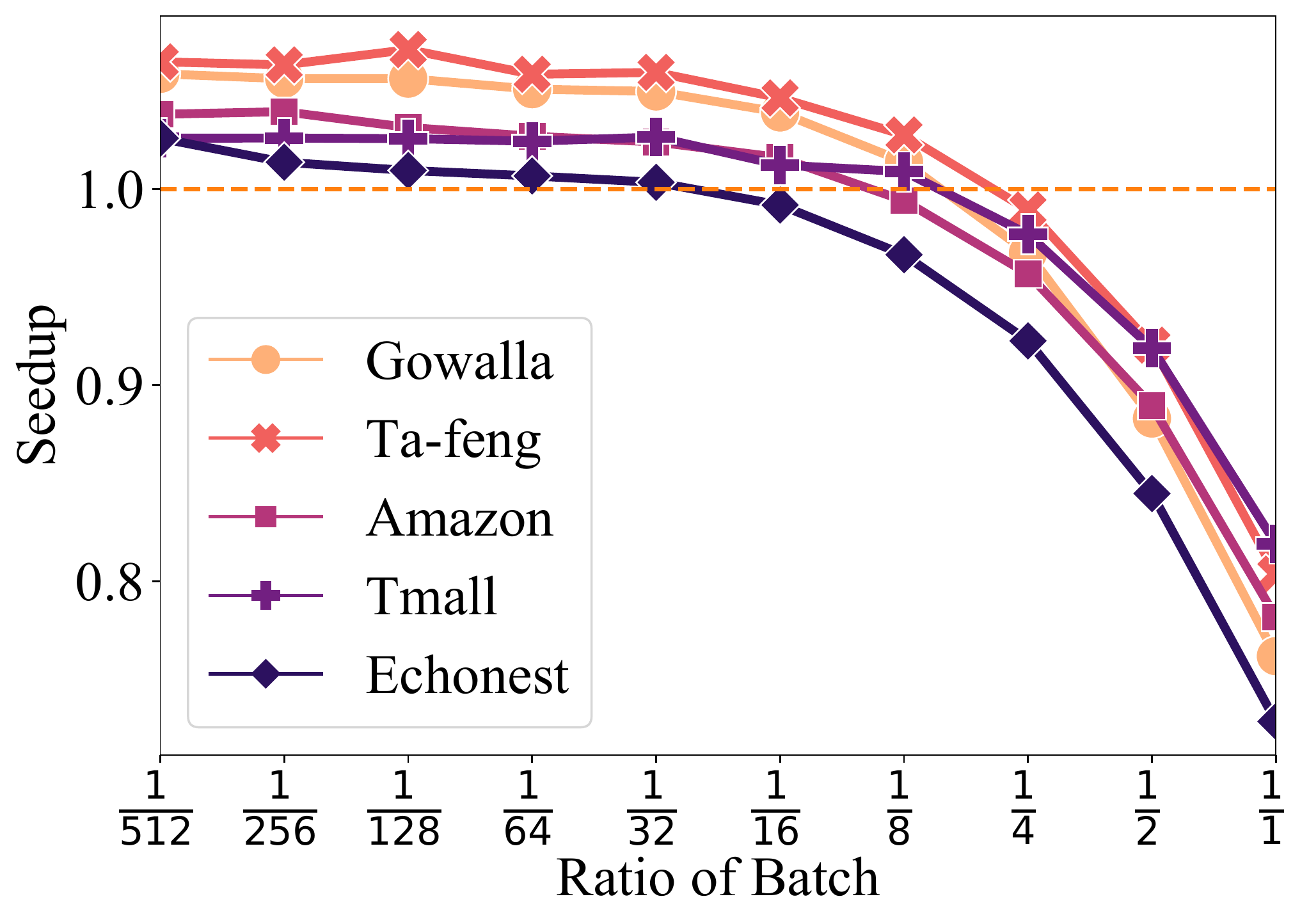}
    \label{fig:diff_resample_time}
    }
    \vspace{-0.3cm}
    \caption{Performances vs. $\vert \mathcal{R}_u \vert$}
\end{figure}

From the figures, resampling nearly $\frac{1}{64}$ of items in the batch can achieve comparable performances of SSL-Pop with full batch items. Specifically, for Tmall, Amazon and Ta-feng datasets, it takes fewer samples to surpass SSL-Pop. This demonstrates the superiority of distinguished negative samples over different user queries, which performs a query-dependent approximation for softmax distribution and benefits the model training. 

This series of experiments provides the potential for more efficient training with much fewer samples. Thus, we further compare the running time of summarizing the gradient under various sizes of the resampled item set in Figure~\ref{fig:diff_resample_time}. We run for 10 epochs for each setting and report the average time compared with SSL-Pop. Since data IO and feature embedding take considerably computational cost, we just compare the calculation on the gradient. As fewer items are selected, the summarized gradient receives fewer calculations.

\subsection{Introducing Features into Item Tower}
In online recommender retrievers, two tower models with deep networks offer the capacity of capturing complicated features. Thus, we further encode the categorical features \textit{SellerID} and \textit{CateID} in the Tmall dataset and build the item tower as two fully-connected layers with 32 and 16 units, using the Relu as the active function.
As shown in Table~\ref{tab:tmall_two_tower}, BIR and $\rchi$IR outperform SSL-Pop, achieving a relative 3.06\% and 4.87\% improvements in terms of NDCG@10 respectively. This indicates the effectiveness of the proposed importance resampling based methods given more features, implying the possibility of deploying deeper retrieval towers.

\begin{figure}[h]
   \begin{minipage}{0.58\textwidth}
        \centering
        \caption{Performances on Tmall}
        \begin{tabular}{c|c|c|c|c}
        \toprule
        & N@10 & R@10 & N@20 & R@20 \\
        \midrule
        SSL      & 0.0311  & 0.0329    & 0.0524  & 0.0568 \\
        SSL-Pop  & 0.0720  & 0.0712    & 0.0999  & 0.1022\\
        \midrule
        BIR      & 0.0742  & 0.0735    & 0.1032  & 0.1057 \\
        $\rchi$IR & 0.0755 & 0.0750    & 0.1067  & 0.1099 \\
        \bottomrule
        \end{tabular}
        \label{tab:tmall_two_tower}
   \end{minipage}
   \begin{minipage}{0.42\textwidth}
        \centering
        \includegraphics[width=0.95\textwidth]{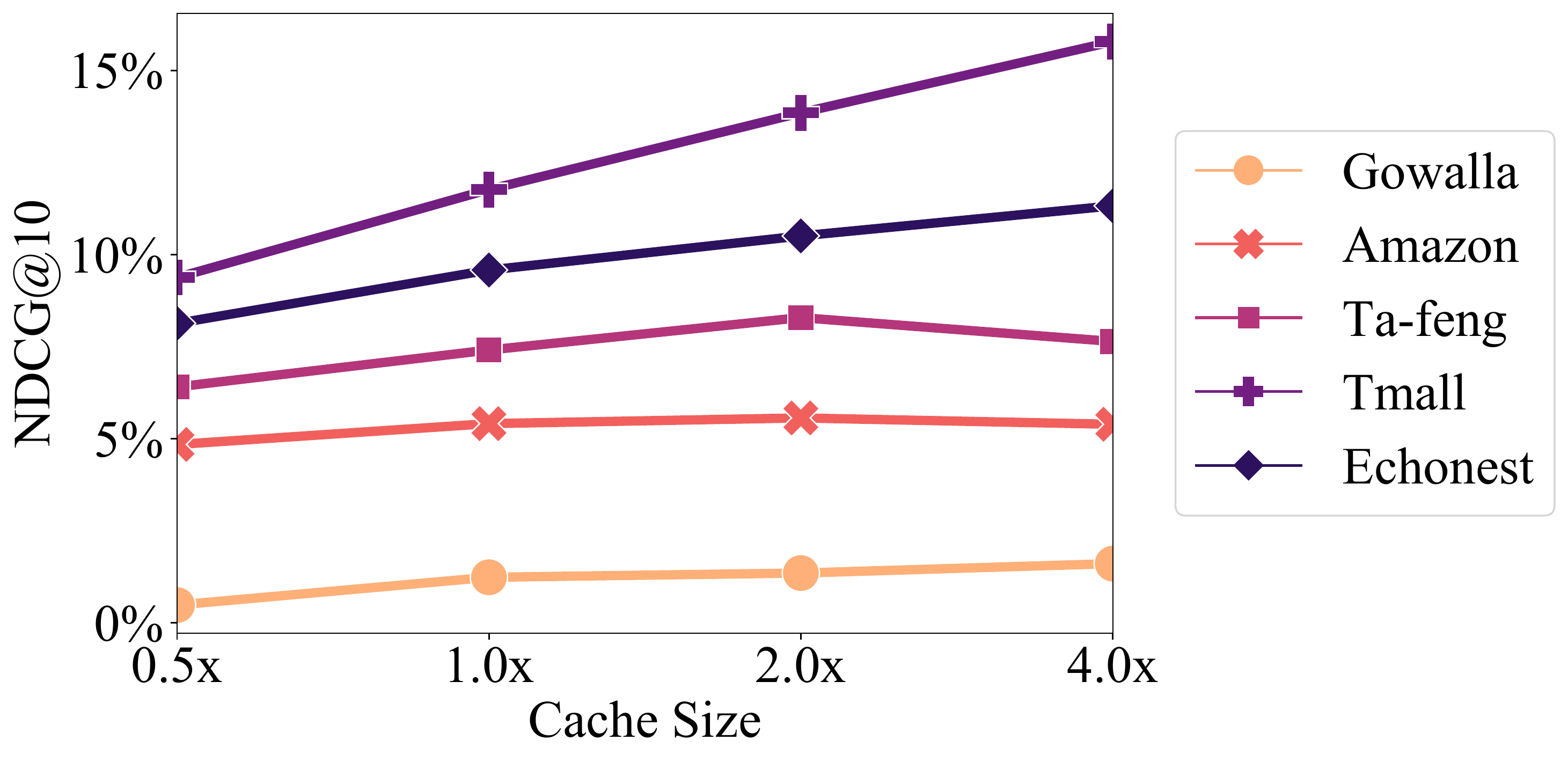}
        \vspace{-0.2cm}
        \caption{Performances vs. cache size}
        \label{fig:cache_size}
   \end{minipage}
\end{figure}

\subsection{Effect of Cache Size}
To figure out the influence of the cache size, we conduct experiments with various sizes, including 0.5x, 1x, 2x and 4x of the batch size (2048). We report the relative improvements compared with BIR in Figure~\ref{fig:cache_size}. As the batch size gets larger, more items participate in the approximation of softmax, achieving a more accurate estimation and a better performance of recommendation quality.

\section{Conclusion}
In this paper, we propose a cache-augmented inbatch importance resampling for training recommender retrievers, which achieves a more accurate estimation of softmax for different queries using the inbatch items. The proposed BIR resamples items based on the similarity scores with popularity-based debias and thus different queries are trained with different sampled items. $\rchi$IR is further proposed to encourage historical informative items to be more trained, which augments the inbatch items and thus helps retriever training. Experiments are finally conducted on five real-world datasets to validate the effectiveness of the proposed BIR and $\rchi$IR.

\bibliographystyle{plain}
\bibliography{myref}

\newpage
\appendix

In the appendix, we first give the proofs of Lemma 3.1, Theorem 3.1 and Theorem 3.2 in section A. Then, we introdue the implementation details about the experiments and provide more empirical results for ablation studies.

\section{Theoretical Analysis}

\begin{lemma}{(Lemma 3.1)}
Assume items follow a distribution $P = \{p_1, p_2, ..., p_N\}$, the item set $\mathcal{J}$ sampled from $P' = \{ \frac{1}{p_1}, \frac{1}{p_2}, ... , \frac{1}{p_N} \}$ follows the uniform distribution when $\vert \mathcal{J} \vert \rightarrow \infty $. That is, for any set of items $\mathcal{A} \subseteq \mathcal{I}$, assuming items appear $K_{\mathcal{A}}$ times in $\mathcal{J}$, when $ \vert \mathcal{J} \vert \rightarrow \infty$, $\frac{K_\mathcal{A}}{\vert \mathcal{J} \vert} = \frac{ \vert \mathcal{A} \vert }{ \vert \mathcal{I} \vert} $.
\end{lemma}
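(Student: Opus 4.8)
The plan is to read the statement as a sampling-importance-resampling (SIR) consistency claim and prove it by combining two applications of the law of large numbers. Concretely, I would first make the generative mechanism explicit: a ``pool'' of $M$ candidates is drawn i.i.d.\ from $P$, each drawn candidate $i$ receives the (unnormalized) weight $1/p_i$, and the elements of $\mathcal{J}$ are obtained by resampling from the pool in proportion to these weights. With this setup the claim is that, as the pool size $M\to\infty$, each resampled element is marginally uniform on $\mathcal{I}$, and then, as $|\mathcal{J}|\to\infty$, the empirical frequencies of $\mathcal{J}$ concentrate on that uniform law. This is the $s\equiv 0$ specialization of the mechanism behind Theorem~\ref{theorem:importance_resampling}, so the structure of the two proofs should mirror each other.

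First I would analyze a single resampled draw conditional on the pool. If $c_i$ denotes the number of pool copies of item $i$, the probability the draw equals $i$ is $\frac{c_i/p_i}{\sum_{k=1}^N c_k/p_k}$. By the law of large numbers $c_i = M p_i + O_P(\sqrt{M})$, hence $c_i/p_i = M + O_P(\sqrt{M}/p_i)$ and the denominator is $MN + O_P(\sqrt{M}\sum_k 1/p_k)$; so the conditional draw probability equals $\frac1N + O_P(1/\sqrt{M})$ and converges to $\frac1N$ for every $i$ (using that every $p_i>0$, so the pool eventually contains all $N$ items). This is precisely the step where the $1/p_i$ reweighting cancels the $p_i$ bias of the pool.

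Next I would pass to the frequencies of $\mathcal{J}$. Conditional on the pool, the entries of $\mathcal{J}$ are i.i.d.\ with the distribution just computed, so $K_\mathcal{A}/|\mathcal{J}|$ is an average of i.i.d.\ indicators and, by the law of large numbers, converges to $\sum_{i\in\mathcal{A}}\Pr(\text{draw}=i)$, which tends to $\sum_{i\in\mathcal{A}}\frac1N = \frac{|\mathcal{A}|}{|\mathcal{I}|}$ as $M\to\infty$. A cleaner route for a write-up is to bound $\mathbb{E}[K_\mathcal{A}/|\mathcal{J}|]$ and $\mathrm{Var}(K_\mathcal{A}/|\mathcal{J}|)$ directly, showing the mean tends to $|\mathcal{A}|/|\mathcal{I}|$ and the variance tends to $0$, which yields convergence in probability without delicate conditioning arguments.

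The main obstacle is bookkeeping around the two coupled limits rather than any deep idea: the statement writes only ``$|\mathcal{J}|\to\infty$'', but the resampled elements are genuinely uniform only in the pool-size limit, so I must either let $M$ and $|\mathcal{J}|$ grow together at a controlled rate or phrase the conclusion as a double limit, and I must record the mild nondegeneracy hypothesis $p_i>0$ for all $i$ (and, for an explicit rate, $p_i$ bounded away from $0$, so that $\sum_k 1/p_k<\infty$ and the pool composition concentrates). The one genuine technical wrinkle is that the entries of $\mathcal{J}$ are only conditionally i.i.d.\ — exchangeable, not independent, unconditionally — and the variance-bound argument above is the tidiest way to dispatch it.
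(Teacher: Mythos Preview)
Your proposal is correct, but it is considerably more elaborate than what the paper does. The paper's proof is a single importance-sampling identity: it treats $\mathcal{J}$ as the pool itself (drawn i.i.d.\ from $P$), writes the target fraction as the weighted empirical mean $\frac{1}{|\mathcal{J}|}\sum_{j\in\mathcal{J}}\mathbb{I}[j\in\mathcal{A}]\cdot\tfrac{1}{p_j}$, and applies the law of large numbers once to replace this by $\mathbb{E}_{i\sim P}\big[\mathbb{I}[i\in\mathcal{A}]\cdot\tfrac{1}{p_i}\big]$, from which the $|\mathcal{A}|/|\mathcal{I}|$ answer drops out after the $p_i$ and $1/p_i$ cancel. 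There is no separate pool size $M$, no analysis of a single resampled draw, and no discussion of the conditional-i.i.d./exchangeability issue; the pool and the resampled set are effectively identified.

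Your two-stage SIR decomposition (pool $\to$ single draw is asymptotically uniform, then many draws $\to$ empirical concentration) is a genuinely different and more careful route. It makes explicit the double limit that the paper leaves implicit, isolates where the $1/p_i$ weight cancels the $p_i$ bias, and records the nondegeneracy assumption $p_i>0$. What you gain is rigor and a cleaner link to the mechanism actually used in Theorem~\ref{theorem:importance_resampling}; what the paper's argument buys is brevity, at the cost of conflating the two limits and leaving some normalization steps to the reader. Either approach suffices for the lemma as stated.
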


\begin{proof}
    \begin{displaymath}
        \begin{split}
            \frac{K_\mathcal{A}}{\vert \mathcal{J} \vert}  \approx \sum_{i \in \mathcal{J} \cap \mathcal{A}} \frac{1}{p_i} & = \frac{1}{\vert \mathcal{J} \vert }\sum_{i \in \mathcal{J}} \mathbb{I}[i \in \mathcal{A}] \cdot \frac{1}{p_i} \\
            & \approx \mathbb{E}_{i \sim P} \mathbb{I}[i \in \mathcal{A}] \cdot \frac{1}{p_i}  \\
            & = \frac{1}{ \vert \mathcal{I} \vert} \sum_{i \in \mathcal{I}} p_i \cdot \mathbb{I}[i \in \mathcal{A}] \cdot \frac{1}{p_i} = \frac{ \vert \mathcal{A} \vert }{ \vert \mathcal{I} \vert}
        \end{split}
    \end{displaymath}
\end{proof}

\begin{theorem}{(Theorem 3.1)}
    When the batch size $\vert B \vert \rightarrow \infty$, items in the resampled set $\mathcal{R}_u$ with respect to the query $u$ based on $w(i|u)$ follow the softmax distribution w.r.t. the query $u$. 
\end{theorem}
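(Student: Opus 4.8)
The plan is to exploit the two-stage structure of the sampler directly: the mini-batch $B$ is formed first, then items are resampled from $B$ according to the weights $w(\cdot\,|u)$. I would take as the working assumption the hypothesis the paper invokes throughout, namely that $B$ consists of $|B|$ i.i.d.\ draws $j_1,\dots,j_{|B|}$ from the empirical popularity distribution $pop$ over $\mathcal{I}$ (the ``inbatch items follow the long-tailed distribution'' statement). A single resampling step then picks item $i$ with conditional probability
$w(i|u)=\bigl(\sum_{k=1}^{|B|}\mathbb{I}[j_k=i]\,e^{s(u,i)}/pop(i)\bigr)\big/\bigl(\sum_{k=1}^{|B|}e^{s(u,j_k)}/pop(j_k)\bigr)$,
and, conditionally on $B$, the elements of $\mathcal{R}_u$ are i.i.d.\ with this law. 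So it suffices to show $w(i|u)\to P^{*}(i|u):=e^{s(u,i)}/\sum_{j\in\mathcal{I}}e^{s(u,j)}$ as $|B|\to\infty$; the statement ``$\mathcal{R}_u$ follows the softmax distribution'' is then read as this asymptotic convergence of the law of each resampled draw (equivalently, convergence of the empirical law of $\mathcal{R}_u$ in the joint limit $|B|,|\mathcal{R}_u|\to\infty$).

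Next I would treat numerator and denominator of $w(i|u)$ separately by the law of large numbers. For fixed $i$, $\frac{1}{|B|}\sum_k\mathbb{I}[j_k=i]\to pop(i)$, so the numerator behaves like $|B|\cdot pop(i)\cdot e^{s(u,i)}/pop(i)=|B|\,e^{s(u,i)}$; and $\frac{1}{|B|}\sum_k e^{s(u,j_k)}/pop(j_k)\to\mathbb{E}_{j\sim pop}\!\bigl[e^{s(u,j)}/pop(j)\bigr]=\sum_{j\in\mathcal{I}}pop(j)\cdot e^{s(u,j)}/pop(j)=\sum_{j\in\mathcal{I}}e^{s(u,j)}$. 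This last identity is precisely the mechanism of Lemma~3.1 — reweighting a $pop$-distributed sample by $1/pop$ flattens the empirical measure to the counting measure on $\mathcal{I}$ — which I would cite rather than re-derive. Taking the ratio of the two limits gives $w(i|u)\to e^{s(u,i)}/\sum_{j\in\mathcal{I}}e^{s(u,j)}=P^{*}(i|u)$, which is exactly the multiplication-rule chain displayed in the main text, with the $|B|/|\mathcal{D}|$ normalization factors made explicit and seen to cancel.

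The main obstacle is making the limit exchange and the ratio convergence rigorous rather than formal. Concretely I would need: (i) $pop(j)>0$ for every $j\in\mathcal{I}$ and $s(u,\cdot)$ bounded, so that $e^{s(u,j)}/pop(j)$ is a bounded, hence integrable, random variable and the LLN for the denominator genuinely applies; (ii) to lift the separate a.s.\ convergences of numerator and denominator to convergence of their ratio, which is elementary once the denominator's limit is a strictly positive constant; and (iii) to state the order of the limits explicitly, since $\mathcal{R}_u$ is a finite set and ``follows the softmax distribution'' must be interpreted asymptotically in $|B|$ for each fixed draw (or jointly as $|\mathcal{R}_u|$ also grows) to avoid circularity. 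No new ideas beyond the LLN and Lemma~3.1 are required; the entire burden is in spelling out these regularity conditions and the mode of convergence.
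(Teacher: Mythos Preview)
Your proposal is correct and follows essentially the same line as the paper's proof: both exploit the two-stage sampling structure, cancel the $pop(i)$ factor in the numerator, and use the $1/pop$-reweighted batch average (Lemma~3.1 / LLN) to recover $\sum_{j\in\mathcal{I}}e^{s(u,j)}$ in the denominator, after which the $|B|/|\mathcal{D}|$ factors cancel. Your write-up is more careful than the paper's about the i.i.d.\ modeling assumption, the positivity and boundedness hypotheses needed for the LLN on $e^{s(u,j)}/pop(j)$, and the mode of convergence, whereas the paper simply states the multiplication-rule chain and leaves the $\approx$ step implicit.
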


\begin{proof}
    Assume $p(i)$ denote the item frequency over whole item corpus, the probability of appearing in $\mathcal{R}_u$ can be obtainded according to the multiplication rule in the following way:
    \begin{displaymath}
        \begin{split}
            P(i | i\in \mathcal{R}_u) & = P(i | i \in B) \cdot w(i |u )\\
            & = p(i) \times \frac{\vert B \vert}{ \vert \mathcal{D} \vert} \times  \frac{\exp(s(u,i) - \log p(i) )}{\sum_{j\in B} \exp(s(u,j) - \log p(j))} \\
            & = \frac{\vert B \vert}{ \vert \mathcal{D} \vert} \times \frac{\exp(s(u,i))}{\sum_{j\in B} \frac{1}{p(i)} \cdot \exp(s(u,j))} \\
            & \approx  \frac{\vert B \vert}{ \vert \mathcal{D} \vert} \times \frac{\exp(s(u,i))}{ \frac{\vert B \vert}{ \vert \mathcal{D} \vert} \times \sum_{j\in \mathcal{I}} \exp(s(u,j))}  = \frac{\exp(s(u,i))}{\sum_{j\in \mathcal{I}} \exp(s(u,j) )}
        \end{split}
    \end{displaymath}

    The approximately equal sign $\approx$ turns to the equal sign when $\vert B \vert \rightarrow \infty$.
\end{proof}

\begin{theorem}{(Theorem 3.2)}
    Assum that the gradients of the logits $\nabla_{\Theta} s(u,i)$ have their coordinates bounded by $G$, the bias of $\nabla_{\Theta} \mathcal{L}_{\text{BIR}}$ satisfies:
    \begin{displaymath}
        \begin{split}
          \mathbb{E}[\nabla_{\Theta} \mathcal{L}_{\text{BIR}}(u)] \leq \nabla_{\Theta} \mathcal{L}_{\text{softmax}}(u) + \mathbb{E}[U] \cdot \left( \frac{\sum_{i\in B} e^{s(u,i)} p(i) Z_{Bp} - Z_B^2}{|B| Z^3} + o\left( \frac{1}{|B|} \right)\right) + \\
            \left( \frac{2G}{|B|} \cdot \frac{\max_{j,l \in B} \vert  p(j) - p(l)  \vert  Z_{Bp} \cdot Z_B}{Z^2 + \sum_{i\in B} e^{s(u,i)} p(i) Z_{Bp}}  + o\left( \frac{1}{|B|} \right) \right) \cdot \bm{1}
        \end{split}
        \end{displaymath}
    where $\mathbb{E}[U] = \sum_{i\in \mathcal{I}} \exp \left({s(u,i)} \right) \nabla_{\Theta} s(u,i)$, $Z_{Bp} =  \sum_{j\in B} \exp\left( {s(u,j) - \log p(j)} \right) $, $Z_B = \sum_{j \in B} \exp \left( {s(u,j)} \right)$, $Z = \sum_{i\in \mathcal{I}} \exp\left( {s(u,i)} \right)$. The probability $p(i)$ refers to the popularity $pop(i)$. 
\end{theorem}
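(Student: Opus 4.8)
The plan is to follow the sampled-softmax bias analysis of Rawat et al.~\cite{rawat2019sampled}, organized around the two sources of error that separate $\mathcal{L}_{\text{BIR}}(u)$ from the exact softmax gradient: (i) the resampled set $\mathcal{R}_u$ has finite size $|B|$, so the per-query softmax over $\mathcal{R}_u$ is a \emph{self-normalized} importance-sampling estimator and carries the usual $O(1/|B|)$ ratio bias; and (ii) the within-batch proposal $w(\cdot\,|u)\propto\exp(s(u,i)-\log p(i))$ only coincides with the ideal softmax $P^*(\cdot\,|u)$ in the limit $|B|\to\infty$ (Theorem~3.1, via Lemma~3.1), so for finite $|B|$ a deterministic discrepancy remains. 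Writing the per-pair gradient as $\nabla_\Theta\mathcal{L}_{\text{BIR}}(u)=-\nabla_\Theta s(u,i^{+})+A/C$ with $A=\sum_{j\in\mathcal{R}_u}e^{s(u,j)}\nabla_\Theta s(u,j)$ and $C=\sum_{j\in\mathcal{R}_u}e^{s(u,j)}$, the first summand reproduces the positive term of $\mathcal{L}_{\text{softmax}}$ exactly, so the entire task reduces to bounding $\mathbb{E}[A/C]-\mathbb{E}_{j\sim P^*}\nabla_\Theta s(u,j)=\mathbb{E}[A/C]-\mathbb{E}[U]/Z$.

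First I would address source (i). Since $\mathcal{R}_u$ consists of $|B|$ i.i.d.\ draws from $B$ under $w(\cdot\,|u)$, both $A$ and $C$ are sums of $|B|$ i.i.d.\ terms, and I would invoke the second-order delta method,
\begin{displaymath}
    \mathbb{E}\!\left[\frac{A}{C}\right]=\frac{\mathbb{E}[A]}{\mathbb{E}[C]}-\frac{\mathrm{Cov}(A,C)}{\mathbb{E}[C]^2}+\frac{\mathbb{E}[A]\,\mathrm{Var}(C)}{\mathbb{E}[C]^3}+o\!\left(\tfrac{1}{|B|}\right).
\end{displaymath}
Evaluating the first and second moments of $A$ and $C$ under the weights $w(i|u)=e^{s(u,i)-\log p(i)}/Z_{Bp}$, and then averaging over the random composition of $B$ (so that the batch-level quantities act as finite-sample proxies for the corpus sums, $Z_{Bp},Z_B\approx\tfrac{|B|}{|\mathcal{D}|}\cdot(\text{corpus sums})$, $\mathbb{E}[C]\approx\tfrac{|B|}{|\mathcal{D}|}Z$, $\mathbb{E}[A]\approx\tfrac{|B|}{|\mathcal{D}|}\mathbb{E}[U]$), is where the constants $Z_B$, $Z_{Bp}$ and $\sum_{i\in B}e^{s(u,i)}p(i)$ enter. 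Because $\mathrm{Var}(C)$ and $\mathrm{Cov}(A,C)$ each carry a $1/|B|$ factor, combining them yields the first correction term $\mathbb{E}[U]\cdot\big(\frac{\sum_{i\in B}e^{s(u,i)}p(i)Z_{Bp}-Z_B^2}{|B|Z^3}+o(1/|B|)\big)$, with all remaining pieces absorbed into the $o(1/|B|)$ remainder.

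Next I would address source (ii): the leading term $\mathbb{E}[A]/\mathbb{E}[C]$ is the exact self-normalized average $\sum_{j\in B}\tilde w(j|u)\,\nabla_\Theta s(u,j)$, and I would bound its deviation from $\mathbb{E}[U]/Z$ by comparing $\tilde w(\cdot\,|u)$ with $P^*(\cdot\,|u)$ restricted to $B$. These two proposals differ only through the $\log p(j)$ correction, so using the coordinatewise bound $|\partial_\Theta s(u,j)|\le G$ together with a telescoping estimate of the mismatch controlled by $\max_{j,l\in B}|p(j)-p(l)|$ produces the $\frac{2G}{|B|}\cdot\frac{\max_{j,l\in B}|p(j)-p(l)|\,Z_{Bp}Z_B}{Z^2+\sum_{i\in B}e^{s(u,i)}p(i)Z_{Bp}}\cdot\bm{1}$ term; the inequality (in place of equality) then follows from upper-bounding this $G$-dependent contribution.

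The main obstacle I anticipate is bookkeeping rather than any single hard inequality: there are two nested expectations — over the random composition of $B$ and over the resampling that produces $\mathcal{R}_u$ — and the variance and covariance terms from the delta method must be expanded under the $w(\cdot\,|u)$-weights and then re-expressed in the batch-level constants $Z,Z_B,Z_{Bp},\sum_{i\in B}e^{s(u,i)}p(i)$ so that they collapse to \emph{exactly} the stated numerators. In particular, getting source (ii) tight enough to land the denominator $Z^2+\sum_{i\in B}e^{s(u,i)}p(i)Z_{Bp}$ rather than a cruder $Z^2$ requires estimating the finite-batch error and the finite-resample error jointly in a single bound instead of treating them independently; this coupling step is the delicate part of the argument, and is where I would expect to lean most directly on the techniques of~\cite{rawat2019sampled}.
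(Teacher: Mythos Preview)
Your decomposition diverges from the paper's at the very first step and, as written, cannot produce the stated constants. You take $A=\sum_{j\in\mathcal{R}_u}e^{s(u,j)}\nabla_\Theta s(u,j)$ and $C=\sum_{j\in\mathcal{R}_u}e^{s(u,j)}$ with no importance correction. The paper instead analyzes the ratio $U/V$ in which every resampled term carries the weight $e^{s(u,o_j)}/(m\,w(o_j|u))$; this is what forces $\mathbb{E}[V]=Z$ and $\mathbb{E}[U]=\sum_{i\in\mathcal{I}}e^{s(u,i)}\nabla_\Theta s(u,i)$, so that the leading term $\mathbb{E}[U]/\mathbb{E}[V]$ already \emph{equals} $\nabla_\Theta\mathcal{L}_{\text{softmax}}$. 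With your unweighted $C$ one gets, conditional on $B$, $\mathbb{E}[C]=|B|\sum_{j\in B}w(j|u)e^{s(u,j)}\propto\sum_{j\in B}e^{2s(u,j)}/p(j)$, and $\mathbb{E}[A]/\mathbb{E}[C]$ is a $e^{2s}/p$-weighted average of the gradients rather than the softmax mean; this is a leading-order bias that no $O(1/|B|)$ delta-method correction or $\max|p(j)-p(l)|$ term can absorb. Relatedly, there is only \emph{one} layer of randomness in the theorem: the bound is stated through the batch-specific constants $Z_B$, $Z_{Bp}$, $\max_{j,l\in B}|p(j)-p(l)|$, so the expectation is over the resampling $\mathcal{R}_u$ with $B$ held fixed. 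Your proposed ``averaging over the random composition of $B$'' would erase exactly those constants.

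Your attribution of the $2G/|B|$ term to a ``source (ii)'' proposal mismatch is also not how the paper obtains it. The paper's split is $\mathbb{E}[U/V]\le\mathbb{E}[U]\cdot\mathbb{E}[1/V]+\Delta_m$ (a condition-on-the-last-sample argument, their Lemma~A.1). The first correction in the statement is then $\mathbb{E}[U]\cdot\mathrm{Var}(V)/Z^3$ from $\mathbb{E}[1/V]\le 1/\mathbb{E}[V]+\mathrm{Var}(V)/\mathbb{E}[V]^3+o(1/m)$, and computing $\mathrm{Var}(V)$ for the \emph{importance-weighted} $V$ gives exactly $(Z_{Bp}\sum_{i\in B}e^{s(u,i)}p(i)-Z_B^2)/m$. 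The $G$-term is the bound on $\Delta_m$: one uses $|\nabla_\Theta s|\le G$ together with $\max_{j,l}|e^{s(u,j)}/w(j|u)-e^{s(u,l)}/w(l|u)|=Z_{Bp}\max_{j,l}|p(j)-p(l)|$, and the denominator $Z^2+\sum_{i\in B}e^{s(u,i)}p(i)Z_{Bp}$ is precisely the leading part of $\mathbb{E}\big[(e^{s(u,i)}+S_{m-1}/m)^2\big]$. To reproduce the paper's proof you should (i) insert the $1/w(o_j|u)$ weights into your numerator and denominator, (ii) keep $B$ fixed, and (iii) replace the delta-method expansion by the two-step split above, substituting $e^{s(u,j)}/w(j|u)=p(j)Z_{Bp}$ only at the very end.
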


\begin{proof}
    The expectation of gradient in BIR with respect to the user $u$ follows as:
    \begin{displaymath}
            \mathbb{E}[\nabla_{\Theta} \mathcal{L}_{\text{BIR}}(u)] = - \nabla_{\Theta} s(u,i) + \mathbb{E} \left[ \frac{e^{s(u,i)} \cdot \nabla_{\Theta} s(u,i) + \sum_{j\in [m]} \left(  \frac{e^{s(u,o_j)}}{ m \cdot w(o_j|u)} \cdot \nabla_{\Theta}s(u,o_j) \right)}{e^{s(u,i)} + \sum_{j\in [m]}  \frac{e^{s(u,o_j)}}{ m \cdot w(o_j|u)}} \right] 
    \end{displaymath} 
    where $w(j|u) = \frac{\exp(s(u,j) - \log p(j))}{\sum_{k \in B} \exp(s(u,k) - \log p(k))}$ with $p(i) = pop(i)$. $m$ denotes the number of resampled items and in this paper $m = \vert B \vert $. The sampled set is denoted as $\mathcal{R}_u = \{o_1, o_2, ..., o_m\}$, which is represented by $[m]$ for the sake of simplification.
    
    Let us define random variables
    \begin{displaymath}
        U = e^{s(u,i)} \cdot \nabla_{\Theta} s(u,i) + \sum_{j\in [m]} \left(  \frac{e^{s(u,o_j)}}{ m \cdot w(o_j|u)} \cdot \nabla_{\Theta}s(u,o_j) \right) 
    \end{displaymath}
    and
    \begin{displaymath}
        V = e^{s(u,i)} + \sum_{j\in [m]}  \frac{e^{s(u,o_j)}}{ m \cdot w(o_j|u)}
    \end{displaymath}
    Then we have 
    \begin{displaymath}
        \mathbb{E}[U] = \sum_{i\in \mathcal{I}} e^{s(u,i)} \cdot \nabla_{\Theta} s(u,i) \text{ and } \mathbb{E}[V] = \sum_{i \in \mathcal{I} } e^{s(u,i)} = Z
    \end{displaymath}
    According to the upper bound in Lemma~\ref{lemma1:ratio appear}, we can obtain that
    \begin{displaymath}
        \mathbb{E}[\nabla_{\Theta} \mathcal{L}_{\text{BIR}}(u)] \leq - \nabla_{\Theta} s(u,i) + \mathbb{E}[U] \cdot \mathbb{E}\left[\frac{1}{V}\right] + \Delta_m
    \end{displaymath}
    \begin{displaymath}
        \Delta_m \triangleq \frac{1}{m} \mathbb{E} \left[ \sum_{k \in B }\frac{e^{s(u,k)} \lvert \nabla_{\Theta} s(u,k) \rvert \cdot \lvert \frac{e^{s(u, o_m)}}{w(o_m|u)} - \frac{e^{s(u, o_k)}}{w(o_m|k)} \rvert}{\left( e^{s(u,i)} + \sum_{j\in [m-1] } \frac{e^{s(u,o_j)}}{m w(o_j|u)} \right)^2} \right]
    \end{displaymath}
    Then, by apply Lemma~\ref{lemma:second}, we can get the following
    \begin{displaymath}
        \begin{split}
            & \mathbb{E}[\nabla_{\Theta} \mathcal{L}_{\text{BIR}}(u)]  \leq - \nabla_{\Theta} s(u,i) + \frac{\mathbb{E}[U]}{\mathbb{E}[V]} + \mathbb{E}[U] \cdot \left(  \frac{ \sum_{k \in B} \frac{e^{2s(u,k)}}{w(k|u)} - \left( \sum_{k \in B} e^{s(u,k)}\right)^2 }{m Z^3} + o\left( \frac{1}{m}\right) \right) + \Delta_m \\
            & = \nabla_{\Theta}\mathcal{L}_{\text{softmax}}(u) + \mathbb{E}[U] \cdot \left(  \frac{ \sum_{k \in B} \frac{e^{2s(u,k)}}{w(k|u)} - \left( \sum_{k \in B} e^{s(u,k)}\right)^2 }{m Z^3} + o\left( \frac{1}{m}\right) \right) + \Delta_m
        \end{split}
    \end{displaymath}
After that, we employ Lemma~\ref{lemma:third} and get the upper bound as
\begin{displaymath}
    \begin{split}
    \mathbb{E}[\nabla_{\Theta} \mathcal{L}_{\text{BIR}}(u)]  \leq \nabla_{\Theta}\mathcal{L}_{\text{softmax}}(u) + \mathbb{E}[U] \cdot  \left(  \frac{ \sum_{k \in B} \frac{e^{2s(u,k)}}{w(k|u)} - \left( \sum_{k \in B} e^{s(u,k)}\right)^2 }{m Z^3} + o\left( \frac{1}{m}\right) \right) + \\
    \left( \frac{2G}{m} \frac{\max_{j,l \in B} \lvert \frac{e^{s(u, j)}}{w(j|u)} - \frac{e^{s(u, l)}}{w(l|k)} \rvert \cdot \sum_{j \in B} e^{s(u,j)}}{Z^2 + \sum_{j \in B} \frac{e^{2s(u,j)}}{w(j|u)}} + o \left(\frac{1}{m}\right)\right) \cdot \bm{1}
\end{split}
\end{displaymath} 
Let us take into the specific value of $w(i|u)$ and obtain the following formula
\begin{displaymath}
    \begin{gathered}
        \frac{e^{s(u,i)}}{w(i|u)} = \frac{e^{s(u,i)} \cdot \sum_{j \in B }e^{s(u,j)}/ p(j)}{ e^{s(u,i)} / p(i) } =  p(i) \cdot \sum_{j \in B }e^{s(u,j)}/ p(j)  = p(i) Z_{Bp} ,\\
        \frac{e^{2s(u,i)}}{w(i|u)} = p(i) \cdot e^{s(u,i)} \cdot Z_{Bp} , \quad \sum_{i\in B} e^{s(u,i)} = Z_B ,\\
        \max_{j,l \in B} \vert \frac{e^{s(u, j)}}{w(j|u)} - \frac{e^{s(u, l)}}{w(l|k)} \vert =   \max_{j,l \in B} \vert p(j) Z_{Bp} - p(l) Z_{Bp}\vert = \max_{j,l \in B} \vert p(j) - p(l) \vert Z_{Bp}
    \end{gathered}
\end{displaymath}
Thus, we obtain the final result as
\begin{displaymath}
    \begin{split}
    \mathbb{E}[\nabla_{\Theta} \mathcal{L}_{\text{BIR}}(u)]  \leq \nabla_{\Theta}\mathcal{L}_{\text{softmax}}(u) + \mathbb{E}[U] \cdot  \left(  \frac{ Z_{Bp} \cdot \sum_{k \in B} p(k) \cdot e^{s(u,k)} - \left( Z_B \right)^2 }{m Z^3} + o\left( \frac{1}{m}\right) \right) + \\
    \left( \frac{2G}{m} \frac{\max_{j,l \in B} \vert p(j) - p(l) \vert Z_{Bp} \cdot Z_B}{Z^2 + \sum_{i \in B} p(i) \cdot e^{s(u,i)} \cdot Z_{Bp} } + o \left(\frac{1}{m}\right)\right) \cdot \bm{1}
\end{split}
\end{displaymath} 
Since $m = \vert B \vert$, the upper bound becomes
\begin{displaymath}
    \begin{split}
    \mathbb{E}[\nabla_{\Theta} \mathcal{L}_{\text{BIR}}(u)]  \leq \nabla_{\Theta}\mathcal{L}_{\text{softmax}}(u) + \mathbb{E}[U] \cdot  \left(  \frac{ Z_{Bp} \cdot \sum_{k \in B} p(k) \cdot e^{s(u,k)} - \left( Z_B \right)^2 }{\vert B \vert Z^3} + o\left( \frac{1}{\vert B \vert}\right) \right) + \\
    \left( \frac{2G}{\vert B \vert} \frac{\max_{j,l \in B} \vert p(j) - p(l) \vert Z_{Bp} \cdot Z_B}{Z^2 + \sum_{i \in B} p(i) \cdot e^{s(u,i)} \cdot Z_{Bp} } + o \left(\frac{1}{\vert B \vert}\right)\right) \cdot \bm{1}
\end{split}
\end{displaymath} 
\end{proof}
We can observe that the bias is correlated to the batch size (i.e., $\vert B \vert$) and the popularity distribution within the mini-batch (i.e., $\max_{j,l \in B} \vert p(j) - p(l) \vert$ ). 

\begin{lemma} \label{lemma1:ratio appear}
    Assume $\mathcal{R}_u=\{o_1, ..., o_m\} \subset B$ be $m$ i.i.d. drawn according to the sampling distribution $w(\cdot | u)$. Then, the ratio appearing in the gradient estimation based on the sampled softmax approach satisfies:
    \begin{displaymath}
    \begin{split}
        \frac{e^{s(u,i)}\cdot \nabla_{\Theta} s(u,i)}{\sum_{j\in \mathcal{I}} e^{s(u,j)}} \cdot \sum_{k \in B} \frac{e^{s(u,k)} \cdot \nabla_{\Theta} s(u,k)}{e^{s(u,i)} + \frac{m-1}{m} \sum_{j \in B} e^{s(u,j)} + \frac{e^{s(u,k)}}{m w(k|u)}} & \leq \\
        \mathbb{E} \left[ \frac{e^{s(u,i)}\cdot \nabla_{\Theta} s(u,i) + \sum_{j\in [m]}  \left( \frac{e^{s(u,o_j)}}{m w(o_j|u)} \cdot \nabla_{\Theta} s(u, o_{j})\right)}{e^{s(u,i )} + \sum_{j \in [m]} \frac{e^{s(u,o_j)}}{m w(o_j|u)}} \right] & \leq \\
        \left( \sum_{k \in \mathcal{I}} e^{s(u,k)} \cdot \nabla_{\Theta} s(u,k) \right) \cdot \mathbb{E} \left[ \frac{1}{e^{s(u,i)} + \sum_{j\in [m]} \frac{e^{s(u, o_j)}}{m w(o_j|u)}} \right] & + \Delta_m
    \end{split}
    \end{displaymath}
    where
    \begin{displaymath}
        \Delta_m \triangleq \frac{1}{m} \mathbb{E} \left[ \sum_{k \in B }\frac{e^{s(u,k)} \vert \nabla_{\Theta} s(u,k) \vert \cdot \vert \frac{e^{s(u, o_m)}}{w(o_m|u)} - \frac{e^{s(u, o_k)}}{w(o_m|k)} \vert}{\left( e^{s(u,i)} + \sum_{j\in [m-1] } \frac{e^{s(u,o_j)}}{m w(o_j|u)} \right)^2} \right]
    \end{displaymath}
\end{lemma}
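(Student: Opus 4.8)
The statement I need to prove is Lemma~\ref{lemma1:ratio appear}, which provides two-sided bounds on the expectation of the ratio that appears in the sampled-softmax gradient estimator. My plan is to treat the upper and lower bounds separately, since the lower bound follows from a direct convexity/monotonicity argument while the upper bound requires the more delicate decoupling captured by the error term $\Delta_m$.

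For the \textbf{lower bound}, the plan is to view the numerator as $e^{s(u,i)}\nabla_\Theta s(u,i) + \sum_{j\in[m]} \frac{e^{s(u,o_j)}}{m\,w(o_j|u)}\nabla_\Theta s(u,o_j)$ and to bound each of the $m$ summands in the denominator $V = e^{s(u,i)} + \sum_{j\in[m]}\frac{e^{s(u,o_j)}}{m\,w(o_j|u)}$ from above: for a fixed index, $V \le e^{s(u,i)} + \frac{m-1}{m}\sum_{k\in B} e^{s(u,k)} + \frac{e^{s(u,o_j)}}{m\,w(o_j|u)}$ because the remaining $m-1$ resampled terms each satisfy $\frac{e^{s(u,o_\ell)}}{w(o_\ell|u)} \le \sum_{k\in B} e^{s(u,k)}$ (this is just $w(o_\ell|u) \ge e^{s(u,o_\ell)}/\sum_{k\in B}e^{s(u,k)}$, which in turn requires $\sum_{k\in B}\frac{1}{p(k)}e^{s(u,k)} \ge \sum_{k\in B}e^{s(u,k)}$ — true since $p(k)\le 1$). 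Taking expectations over the single index $o_j$ against $w(\cdot|u)$ collapses $\sum_j \frac{e^{s(u,o_j)}}{m\,w(o_j|u)}(\cdot) \to \sum_{k\in B} e^{s(u,k)}(\cdot)$, and combining with $P^*(i|u) = e^{s(u,i)}/Z$ for the positive term produces the stated lower bound.

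For the \textbf{upper bound}, the strategy is the standard leave-one-out decoupling used by Rawat et al.~\cite{rawat2019sampled}: write $V = V_{m-1} + \frac{e^{s(u,o_m)}}{m\,w(o_m|u)}$ where $V_{m-1}$ uses only $o_1,\dots,o_{m-1}$, and similarly split the numerator. Since $o_m$ is independent of $V_{m-1}$, conditioning on $o_1,\dots,o_{m-1}$ lets me pull the expectation over $o_m$ through: $\mathbb{E}_{o_m}\!\big[\frac{e^{s(u,o_m)}}{m\,w(o_m|u)}\nabla_\Theta s(u,o_m)\big] = \frac{1}{m}\sum_{k\in B} e^{s(u,k)}\nabla_\Theta s(u,k)$, which (summed over all $m$ choices of which index to leave out, exploiting i.i.d. symmetry) rebuilds $\sum_{k\in\mathcal{I}} e^{s(u,k)}\nabla_\Theta s(u,k)$ times $\mathbb{E}[1/V]$ up to the mismatch between $1/V$ and $1/V_{m-1}$. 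That mismatch, controlled via $\frac{1}{V} - \frac{1}{V_{m-1}} = -\frac{1}{V V_{m-1}}\big(\frac{e^{s(u,o_m)}}{m\,w(o_m|u)}\big)$ and $V, V_{m-1} \ge e^{s(u,i)} + \sum_{j\in[m-1]}\frac{e^{s(u,o_j)}}{m\,w(o_j|u)}$ in the denominator, together with the i.i.d.\ exchange of $o_m$ and $o_k$ in the residual, is exactly what gets packaged into $\Delta_m$ with its $\frac1m$ prefactor and the difference $\big|\frac{e^{s(u,o_m)}}{w(o_m|u)} - \frac{e^{s(u,o_k)}}{w(o_m|k)}\big|$.

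The main obstacle I anticipate is bookkeeping the leave-one-out symmetrization cleanly: making precise how the $m$ separate ``drop index $j$'' terms recombine into a single sum over the full corpus $\mathcal{I}$, and verifying that the leftover cross-terms are dominated coordinate-wise by the expression defining $\Delta_m$ (in particular tracking that the bounds on $|\nabla_\Theta s|$ are applied per-coordinate so the final inequality holds entrywise, not just in norm). This is where I would lean most heavily on adapting the argument structure of~\cite{rawat2019sampled} rather than re-deriving it; the algebraic substitutions of the explicit $w(i|u) = e^{s(u,i)-\log p(i)}/Z_{Bp}$ that turn these generic bounds into the $Z_{Bp}, Z_B$ form are then routine and are carried out in the proof of Theorem~3.2 above.
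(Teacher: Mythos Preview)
Your upper-bound plan matches the paper's own argument: split off the positive term, use i.i.d.\ symmetry to reduce the $m$ sampled contributions to $m$ copies of one, condition on the first $m-1$ draws, integrate the last one against $w(\cdot|u)$ to produce $\tfrac{1}{m}\sum_{k\in B}e^{s(u,k)}\nabla_\Theta s(u,k)\cdot(\cdots)$, and then add and subtract the same numerator over $e^{s(u,i)}+S_m/m$ so as to isolate the main term $\bigl(\sum_k e^{s(u,k)}\nabla_\Theta s(u,k)\bigr)\cdot\mathbb{E}[1/V]$ from a correction that is bounded coordinate-wise by $\Delta_m$ via $\lvert 1/A-1/B\rvert\le \lvert A-B\rvert\big/\bigl(e^{s(u,i)}+S_{m-1}/m\bigr)^2$. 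On this half you and the paper agree.

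Your lower-bound argument, however, has a genuine gap. You assert $\dfrac{e^{s(u,o_\ell)}}{w(o_\ell|u)}\le\sum_{k\in B}e^{s(u,k)}$ and justify it by ``$\sum_{k}\tfrac{1}{p(k)}e^{s(u,k)}\ge\sum_{k}e^{s(u,k)}$, true since $p(k)\le 1$''. But substituting $w(o_\ell|u)=\dfrac{e^{s(u,o_\ell)}/p(o_\ell)}{\sum_k e^{s(u,k)}/p(k)}$ gives
\[
\frac{e^{s(u,o_\ell)}}{w(o_\ell|u)}=p(o_\ell)\sum_{k\in B}\frac{e^{s(u,k)}}{p(k)},
\]
so the desired inequality is equivalent to $\sum_k e^{s(u,k)}\bigl(1-\tfrac{p(o_\ell)}{p(k)}\bigr)\ge 0$, which \emph{fails} whenever $o_\ell$ is among the more popular items in $B$; the inequality you invoke points the wrong way for this step. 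Even if a uniform upper bound on the denominator were available, the coordinates of $\nabla_\Theta s(u,k)$ take both signs, so shrinking $V$ does not by itself deliver a coordinate-wise lower bound on $U/V$; the negative coordinates would still need a separate treatment. (The paper's own proof in fact only writes out the upper bound; the lower half of the lemma is stated but not argued, and its displayed form --- a product of two gradient vectors --- appears to be a typo, so do not over-invest in matching it literally.)
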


\begin{proof}
    \begin{displaymath}
        \begin{split}
        & \mathbb{E} \left[ \frac{e^{s(u,i)}\cdot \nabla_{\Theta} s(u,i) + \sum_{j\in [m]}  \left( \frac{e^{s(u,o_j)}}{m w(o_j|u)} \cdot \nabla_{\Theta} s(u, o_{j})\right)}{e^{s(u,i )} + \sum_{j \in [m]} \frac{e^{s(u,o_j)}}{m w(o_j|u)}} \right] \\
        = & \mathbb{E} \left[ \frac{e^{s(u,i)}\cdot \nabla_{\Theta} s(u,i) }{e^{s(u,i )} + \sum_{j \in [m]} \frac{e^{s(u,o_j)}}{m w(o_j|u)}} \right] + \sum_{j \in [m] }\mathbb{E} \left[ \frac{ \frac{e^{s(u,o_j)}}{m w(o_j|u)} \cdot \nabla_{\Theta} s(u, o_{j})}{e^{s(u,i )} + \sum_{j \in [m]} \frac{e^{s(u,o_j)}}{m w(o_j|u)}} \right] \\
        = & \mathbb{E} \left[ \frac{e^{s(u,i)}\cdot \nabla_{\Theta} s(u,i) }{e^{s(u,i )} + \sum_{j \in [m]} \frac{e^{s(u,o_j)}}{m w(o_j|u)}} \right] + m \mathbb{E} \left[ \frac{ \frac{e^{s(u,o_j)}}{m w(o_j|u)} \cdot \nabla_{\Theta} s(u, o_{j})}{e^{s(u,i )} + \sum_{j \in [m]} \frac{e^{s(u,o_j)}}{m w(o_j|u)}} \right]
        \end{split}
    \end{displaymath}
    For $1 \leq l \leq m$, we define $S_l = \sum_{j\in [l]} \frac{e^{s(u,o_j)}}{w(o_j |u)} $,
    \begin{displaymath}
        \begin{split}
            & \mathbb{E} \left[ \frac{ \frac{e^{s(u,o_j)}}{m w(o_j|u)} \cdot \nabla_{\Theta} s(u, o_{j})}{e^{s(u,i )} + \sum_{j \in [m]} \frac{e^{s(u,o_j)}}{m w(o_j|u)}} \right] \\
            = & \mathbb{E} \left[ \left. \mathbb{E} \left[ \frac{\frac{e^{s(u,o_j)}}{m w(o_j|u)} \cdot \nabla_{\Theta} s(u, o_{j})}{e^{s(u,i )} + \frac{S_{m-1}}{m} + \frac{e^{s(u,o_m)}}{m w(o_m|u)}}\right] \right| S_{m-1} \right] \\
            = & \mathbb{E} \left[ \sum_{j\in B} w(k|u) \cdot \frac{ \frac{e^{s(u,k)}}{m w(k|u)} \cdot \nabla_{\Theta} s(u, k)}{e^{s(u,i )} + \frac{S_{m-1}}{m} + \frac{e^{s(u,k)}}{m w(k|u)}}\right] \\
            = & \frac{1}{m} \sum_{j \in B} \mathbb{E} \left[ \frac{ e^{s(u,k)} \cdot \nabla_{\Theta} s(u, k)}{e^{s(u,i )} + \frac{S_{m-1}}{m} + \frac{e^{s(u,k)}}{m w(k|u)}} \right] \\
            = & \frac{1}{m} \sum_{j \in B} \mathbb{E} \left[ \frac{ e^{s(u,k)} \cdot \nabla_{\Theta} s(u, k)}{e^{s(u,i )} + \frac{S_{m-1}}{m} + \frac{e^{s(u,k)}}{m w(k|u)}} -  \frac{ e^{s(u,k)} \cdot \nabla_{\Theta} s(u, k)}{e^{s(u,i)} +\frac{S_{m}}{m} } + \frac{ e^{s(u,k)} \cdot \nabla_{\Theta} s(u, k)}{e^{s(u,i)} +\frac{S_{m}}{m}}\right] \\
            = & \frac{1}{m} \sum_{j \in B} \mathbb{E} \left[ \frac{ e^{s(u,k)} \cdot \nabla_{\Theta} s(u, k)}{e^{s(u,i )} + \frac{S_{m-1}}{m} + \frac{e^{s(u,k)}}{m w(k|u)}} -  \frac{ e^{s(u,k)} \cdot \nabla_{\Theta} s(u, k)}{e^{s(u,i)} +\frac{S_{m}}{m} } \right] + \frac{1}{m} \sum_{j\in B} e^{s(u,j)} \nabla_{\Theta} s(u, j) \mathbb{E}\left[\frac{1}{e^{s(u,i)} + \frac{S_{m}}{m}}\right]\\
            \leq & \frac{1}{m^2} \mathbb{E} \left[ \sum_{k \in B} \frac{e^{s(u,k)} |\nabla_\Theta s(u,k)| \cdot |\frac{e^{s(u, o_m)}}{w(o_m|u)} - \frac{e^{s(u, k)}}{w(k|u)}|}{\left( e^{s(u,i )} + \frac{S_{m-1}}{m} + \frac{e^{s(u,k)}}{m w(k|u)}\right) \cdot \left(e^{s(u,i)} + \frac{S_{m}}{m} \right)}\right] + \frac{1}{m} \sum_{j\in B} e^{s(u,j)} \nabla_{\Theta} s(u, j) \mathbb{E}\left[\frac{1}{e^{s(u,i)} + \frac{S_{m}}{m}}\right]\\
            \leq & \frac{1}{m^2} \mathbb{E} \left[ \sum_{k \in B} \frac{e^{s(u,k)} |\nabla_\Theta s(u,k)| \cdot |\frac{e^{s(u, o_m)}}{w(o_m|u)} - \frac{e^{s(u, k)}}{w(k|u)}|}{\left(e^{s(u,i)} + \frac{S_{m}}{m} \right)^2}\right] + \frac{1}{m} \sum_{j\in B} e^{s(u,j)} \nabla_{\Theta} s(u, j) \mathbb{E}\left[\frac{1}{e^{s(u,i)} + \frac{S_{m}}{m}}\right]
        \end{split}
    \end{displaymath}
    After, we obtain the upper bound that 
    \begin{displaymath}
        \begin{split}
            & \mathbb{E} \left[ \frac{ \frac{e^{s(u,o_j)}}{m w(o_j|u)} \cdot \nabla_{\Theta} s(u, o_{j})}{e^{s(u,i )} + \sum_{j \in [m]} \frac{e^{s(u,o_j)}}{m w(o_j|u)}} \right] \\
            \leq &   e^{s(u,i)} \nabla_{\Theta} s(u, i) \mathbb{E}\left[\frac{1}{e^{s(u,i)} + \frac{S_{m}}{m}}\right] + \sum_{j\in B} e^{s(u,j)} \nabla_{\Theta} s(u, j) \mathbb{E}\left[\frac{1}{e^{s(u,i)} + \frac{S_{m}}{m}}\right] + \\
            & \frac{1}{m} \mathbb{E} \left[ \sum_{k \in B} \frac{e^{s(u,k)} |\nabla_\Theta s(u,k)| \cdot |\frac{e^{s(u, o_m)}}{w(o_m|u)} - \frac{e^{s(u, k)}}{w(k|u)}|}{\left(e^{s(u,i)} + \frac{S_{m}}{m} \right)^2}\right] \\
            leq & \sum_{j\in \mathcal{I}} e^{s(u,j)} \nabla_{\Theta} s(u, j) \cdot \mathbb{E}\left[\frac{1}{e^{s(u,i)} + \frac{S_{m}}{m}}\right] + \frac{1}{m} \mathbb{E} \left[ \sum_{k \in B} \frac{e^{s(u,k)} |\nabla_\Theta s(u,k)| \cdot |\frac{e^{s(u, o_m)}}{w(o_m|u)} - \frac{e^{s(u, k)}}{w(k|u)}|}{\left(e^{s(u,i)} + \frac{S_{m}}{m} \right)^2}\right] \\
            = & \sum_{j\in \mathcal{I}} e^{s(u,j)} \nabla_{\Theta} s(u, j) \cdot \mathbb{E}\left[\frac{1}{e^{s(u,i)} + \frac{S_{m}}{m}}\right] + \Delta_m
        \end{split}
    \end{displaymath}
\end{proof}

\begin{lemma} \label{lemma:second}
    Consider the random variable,   
    \begin{displaymath}
        V = e^{s(u,i)} + \sum_{j\in [m]}  \frac{e^{s(u,o_j)}}{ m \cdot w(o_j|u)}
    \end{displaymath}
    we have 
    \begin{displaymath}
        \mathbb{E}\left[\frac{1}{V}\right] \leq \frac{1}{\mathbb{E}\left[ V\right] } + \frac{ \sum_{k \in B} \frac{e^{2s(u,k)}}{w(k|u)} - \left( \sum_{k \in B} e^{s(u,k)}\right)^2 }{m Z^3} + o\left( \frac{1}{m}\right)
    \end{displaymath} 
\end{lemma}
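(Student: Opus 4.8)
The plan is to treat $V$ as a sample mean plus a deterministic constant and to apply a second-order delta-method (Taylor) expansion of $v\mapsto 1/v$ about $\mu:=\mathbb{E}[V]$, controlling the remainder by exploiting that $V$ is bounded below by the deterministic positive quantity $e^{s(u,i)}$. Write $o_1,\dots,o_m$ for the items resampled i.i.d.\ from $w(\cdot|u)$ over $B$ and set $X_j:=e^{s(u,o_j)}/w(o_j|u)$, so that $V=e^{s(u,i)}+\frac1m\sum_{j=1}^m X_j$. Since $B$ is finite and $w(k|u)>0$ for every $k\in B$, each $X_j$ is bounded; in particular all moments below are finite and every interchange of expectation with a finite sum is legitimate.

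First I would record the exact identity
\begin{displaymath}
 \frac1V=\frac1\mu-\frac{V-\mu}{\mu^2}+\frac{(V-\mu)^2}{\mu^3}-\frac{(V-\mu)^3}{V\mu^3},
\end{displaymath}
obtained from $\frac1V-\frac1\mu+\frac{V-\mu}{\mu^2}=\frac{(V-\mu)^2}{V\mu^2}$ followed by the split $\frac{(V-\mu)^2}{V\mu^2}=\frac{(V-\mu)^2}{\mu^3}-\frac{(V-\mu)^3}{V\mu^3}$. Taking expectations and using $\mathbb{E}[V-\mu]=0$,
\begin{displaymath}
 \mathbb{E}\!\left[\frac1V\right]=\frac1\mu+\frac{\operatorname{Var}(V)}{\mu^3}-\mathbb{E}\!\left[\frac{(V-\mu)^3}{V\mu^3}\right].
\end{displaymath}
For the variance term, independence of the $X_j$ gives $\operatorname{Var}(V)=\frac1m\operatorname{Var}(X_1)$, and a direct computation with $o_1\sim w(\cdot|u)$ yields $\mathbb{E}[X_1]=\sum_{k\in B}e^{s(u,k)}$ and $\mathbb{E}[X_1^2]=\sum_{k\in B}e^{2s(u,k)}/w(k|u)$, so that
\begin{displaymath}
 \frac{\operatorname{Var}(V)}{\mu^3}=\frac{1}{m\mu^3}\left(\sum_{k\in B}\frac{e^{2s(u,k)}}{w(k|u)}-\Big(\sum_{k\in B}e^{s(u,k)}\Big)^2\right),
\end{displaymath}
which is exactly the $O(1/m)$ correction in the claimed bound once $\mu$ is identified with $Z$.

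It then remains to show the remainder term is $o(1/m)$. Here I would use $V\ge e^{s(u,i)}>0$ to bound $\big|\mathbb{E}[(V-\mu)^3/(V\mu^3)]\big|\le \mathbb{E}|V-\mu|^3/(e^{s(u,i)}\mu^3)$ and then note that $V-\mu=\frac1m\sum_{j=1}^m(X_j-\mathbb{E}X_j)$ is an average of i.i.d.\ bounded centered variables, whence $\mathbb{E}|V-\mu|^3=O(m^{-3/2})$ by Marcinkiewicz--Zygmund; alternatively one can expand a further order, $\frac{(V-\mu)^3}{V\mu^3}=\frac{(V-\mu)^3}{\mu^4}-\frac{(V-\mu)^4}{V\mu^4}$, and use that the third and fourth central moments of a sample mean of $m$ i.i.d.\ terms are both $O(m^{-2})$. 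Either way the remainder is $o(1/m)$. Substituting $\mu=\mathbb{E}[V]$ in the leading term, using $\mathbb{E}[V]=Z$ (as recorded in the proof of Theorem~3.2) to replace $\mu^3$ by $Z^3$ in the correction — the replacement error being $o(1/m)$ — and weakening the resulting equality to the stated inequality completes the argument.

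The step I expect to be the main obstacle is the control of this third-order remainder: $1/v$ is badly behaved near $v=0$, so one cannot simply pull a signed central moment out of $\mathbb{E}[(V-\mu)^3/(V\mu^3)]$; the argument genuinely relies on the almost-sure lower bound $V\ge e^{s(u,i)}$ furnished by the always-present positive logit, together with the $m^{-3/2}$ (or $m^{-2}$) decay of the higher central moments of a sample mean. Everything else reduces to computing the first two moments of $X_1$.
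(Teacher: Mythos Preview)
Your proposal is correct and follows essentially the same route as the paper: the paper invokes its toolbox Lemma~A.6 (the second-order Taylor bound $\mathbb{E}[1/V]\le 1/\mathbb{E}[V]+\operatorname{Var}(V)/\mathbb{E}[V]^3+\mathbb{E}|V-\mathbb{E}[V]|^3/a^4$, attributed to Rawat et al.) and then computes $\operatorname{Var}(V)$ exactly as you do via the i.i.d.\ decomposition $V=e^{s(u,i)}+\tfrac1m\sum_j X_j$. The only difference is that you derive the Taylor identity and remainder control from scratch (using $V\ge e^{s(u,i)}$ and Marcinkiewicz--Zygmund) instead of citing the packaged lemma, which makes your write-up more self-contained but not methodologically distinct.
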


\begin{proof}
    As provided by Lemma~\ref{lemma:toolbox} that $\frac{1}{\mathbb{E}\left[V\right]} \leq \mathbb{E} \left[ \frac{1}{V} \right] \leq  \mathbb{E} \left[ \frac{1}{V} \right] + \frac{Var(V)}{\mathbb{E}\left[ V\right]^3} +  o \left( \frac{1}{m}\right) $, 
    \begin{displaymath}
        \begin{split}
            \mathbb{E}[V^2] & = e^{2s(u,i)} + \frac{e^{s(u,i)}}{m} \mathbb{E}\left[ \sum_{j \in [m]} \frac{2e^{s(u,o_j)}}{w(o_j|u)}\right] + \frac{1}{m^2} \mathbb{E} \left[ \sum_{j,l in [B]} \frac{e^{s(u,o_j)} + e^{s(u, o_l)}}{w(o_j|u) w(o_l|u)}\right]\\
            & = \mathbb{E}[V]^2 + \frac{1}{m} \sum_{j\in B} \frac{2e^{s(u,j)}}{w(j|u)} - \frac{1}{m} \sum_{j,l \in B} e^{s(u,j)} + e^{s(u,l)}  
        \end{split}
    \end{displaymath}
    Then the variance is obtained 
    \begin{displaymath}
        Var(V) = \mathbb{E}[V^2] - \mathbb{E}[V]^2 = \frac{1}{m} \sum_{j\in B} \frac{2e^{s(u,j)}}{w(j|u)} - \frac{1}{m} \sum_{j,l \in B} e^{s(u,j)} + e^{s(u,l)}  
    \end{displaymath}
    Thus, 
    \begin{displaymath}
        \begin{split}
            \frac{1}{\mathbb{E}\left[V\right]} &  \leq  \mathbb{E} \left[ \frac{1}{V} \right] + \frac{\sum_{j\in B} \frac{2e^{s(u,j)}}{w(j|u)} -  \sum_{j,l \in B} e^{s(u,j)} + e^{s(u,l)}}{mZ^3} +  o \left( \frac{1}{m}\right) \\
            & =  \mathbb{E} \left[ \frac{1}{V} \right] + \frac{\sum_{j\in B} \frac{2e^{s(u,j)}}{w(j|u)} -  \left(\sum_{j \in B} e^{s(u,j)}\right)^2}{mZ^3} +  o \left( \frac{1}{m}\right)
        \end{split}
    \end{displaymath}
\end{proof}

\begin{lemma}\label{lemma:fordeltam_deno}
    Consider the random variable
    \begin{displaymath}
        W = \left( e^{s(u,i)} + \frac{S_{m-1}}{m} \right)^2
    \end{displaymath}
    we have
    \begin{displaymath}
        \mathbb{E}\left[ \frac{1}{W}\right] = \frac{1}{Z^2 + \sum_{j \in B} \frac{e^{s(u,j)}}{w(j|u)}} + o \left( \frac{1}{m}\right)
    \end{displaymath}
\end{lemma}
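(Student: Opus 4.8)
The plan is to evaluate $\mathbb{E}[1/W]$ by a second-order delta-method expansion about the mean, in exactly the spirit of the proof of Lemma~\ref{lemma:second}. Write $W = Y^{2}$ with $Y = e^{s(u,i)} + \tfrac{1}{m}S_{m-1}$, and recall from the setup of Lemma~\ref{lemma1:ratio appear} that $S_{m-1} = \sum_{j=1}^{m-1} X_{j}$ with $X_{j} = e^{s(u,o_{j})}/w(o_{j}|u)$ and $o_{1},\dots,o_{m-1}$ drawn i.i.d.\ from $w(\cdot|u)$ on $B$. Two structural facts carry the argument: $Y$ is a deterministic shift of a sample average of i.i.d.\ variables, so it concentrates as $m\to\infty$; and $Y \ge e^{s(u,i)} > 0$ surely, so $W$ is bounded away from $0$ and $w \mapsto 1/w$ is smooth with uniformly bounded derivatives on the range of $W$, which is what keeps the Taylor remainder $o(1/m)$.

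First I would read off the first two moments of $Y$ from the i.i.d.\ structure. Since $o_{1}\sim w(\cdot|u)$ one gets, in one line, $\mathbb{E}[X_{1}] = \sum_{k\in B} e^{s(u,k)} = Z_{B}$ and $\mathbb{E}[X_{1}^{2}] = \sum_{k\in B} e^{2s(u,k)}/w(k|u)$, whence $\mathbb{E}[Y] = e^{s(u,i)} + \tfrac{m-1}{m}Z_{B}$ and $\mathrm{Var}(Y) = \tfrac{m-1}{m^{2}}\bigl(\mathbb{E}[X_{1}^{2}] - Z_{B}^{2}\bigr) = O(1/m)$, with the third and higher central moments of $Y$ of order $o(1/m)$. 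Hence $\mathbb{E}[W] = \mathbb{E}[Y]^{2} + \mathrm{Var}(Y)$, and substituting these and simplifying exactly as in the earlier proofs — replacing batch sums by the corresponding corpus sums and using the identity $e^{2s(u,j)}/w(j|u) = p(j)\,e^{s(u,j)}\,Z_{Bp}$ established there — recovers the quantity $Z^{2} + \sum_{j\in B} e^{2s(u,j)}/w(j|u)$ (equivalently $Z^{2} + \sum_{j\in B} p(j)e^{s(u,j)}Z_{Bp}$, as in the bias bound). I would carry the square in that summand, which the statement elides.

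Then I would pass from $\mathbb{E}[W]$ to $\mathbb{E}[1/W]$ by applying the toolbox inequality (Lemma~\ref{lemma:toolbox}) to the nonnegative variable $W$: it yields $1/\mathbb{E}[W] \le \mathbb{E}[1/W] \le 1/\mathbb{E}[W] + \mathrm{Var}(W)/\mathbb{E}[W]^{3} + o(1/m)$, so it suffices to check $\mathrm{Var}(W)/\mathbb{E}[W]^{3} = o(1/m)$. Writing $Y = \mathbb{E}[Y] + \Delta$ with $\mathbb{E}[\Delta]=0$, one has $\mathrm{Var}(W) = \mathrm{Var}(2\mathbb{E}[Y]\Delta + \Delta^{2})$, which is dominated by $4\,\mathbb{E}[Y]^{2}\mathrm{Var}(Y)$; in the $\lvert B\rvert\to\infty$ regime $\tfrac{1}{m}S_{m-1}$ and hence $\mathbb{E}[Y]$ grow, so $\mathrm{Var}(W)/\mathbb{E}[W]^{3}$ is of strictly smaller order than $1/m$, and the two bounds pinch $\mathbb{E}[1/W] = 1/\mathbb{E}[W] + o(1/m)$. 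Combining with the previous paragraph finishes the proof.

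I expect the real work to be the asymptotic bookkeeping, not the delta method (which is routine once $Y$ is bounded below): one must verify that each $O(1/m)$-scale piece that is dropped — the $\tfrac{m-1}{m}$-versus-$1$ gap in $\mathbb{E}[Y]$, the cross terms and the $-Z_{B}^{2}$ coming from $\mathbb{E}[Y]^{2}+\mathrm{Var}(Y)$, and $\mathrm{Var}(W)/\mathbb{E}[W]^{3}$ — is genuinely negligible against the retained term, and that the batch-sum-to-corpus-sum substitution used throughout the appendix is applied consistently. A small but easy-to-miss point is that $S_{m-1}$ carries $m-1$ summands yet is normalized by $m$, so those factors have to be tracked rather than silently set to $1$.
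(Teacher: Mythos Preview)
Your proposal is correct and follows essentially the same route as the paper: compute $\mathbb{E}[W]$ and then invoke the toolbox inequality (Lemma~\ref{lemma:toolbox}) to pass to $\mathbb{E}[1/W]$. The only cosmetic differences are that you factor $W=Y^{2}$ and read off $\mathbb{E}[W]=\mathbb{E}[Y]^{2}+\mathrm{Var}(Y)$ whereas the paper expands $\mathbb{E}[W]$ directly, and that you appeal to the finer $\mathrm{Var}(W)/\mathbb{E}[W]^{3}$ form of the toolbox bound while the paper uses the cruder $\mathrm{Var}(W)/a^{3}$ with $a=e^{2s(u,i)}$ (hence its remainder is actually written as $\mathcal{O}(1/m)$); your observation about the missing square on $e^{s(u,j)}$ in the statement also matches what the paper's own proof and Lemma~\ref{lemma:third} use.
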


\begin{proof}
    \begin{displaymath}
        \begin{split}
            \mathbb{E}\left[ W \right] & = \mathbb{E} \left[ \left( e^{s(u,i)} + \frac{S_{m-1}}{m} \right)^2 \right] \\
            & = Z^2 + \sum_{j\in B} \frac{e^{2s(u,j)}}{w(j|u)} - \frac{2}{m} \cdot e^{s(u,i)} \sum_{j \in B} e^{s(u,j)}  - \frac{1}{m} \sum_{j \in B}\frac{e^{2s(u,j)}}{w(j|u)} - \frac{3m-2}{m^2} \sum_{j,l \in B} e^{s(u,j) + s(u,l)}\\
            & = \widetilde{W}
        \end{split}
    \end{displaymath}
    According to Lemma~\ref{lemma:toolbox} $\mathbb{E}\left[\frac{1}{W}\right] \leq \frac{1}{\mathbb{E}[W]} + \frac{Var(W)}{e^{6s(u,i)}} = \frac{1}{\mathbb{E}[W]} + \mathcal{O}\left( \frac{1}{m}\right)$, we have 
    \begin{displaymath}
        \mathbb{E}\left[\frac{1}{W}\right] \leq \frac{1}{\widetilde{W}} + \mathcal{O}\left( \frac{1}{m}\right) \leq \frac{1}{Z^2 + \sum_{j\in B} \frac{e^{s(u,j)}}{w(j|u)}} + \mathcal{O}\left( \frac{1}{m}\right)
    \end{displaymath}
\end{proof}

\begin{lemma} \label{lemma:third}
    For any model parameter $\Theta \in \{ \Theta\}$, by assuming that the coordinates of the gradient vectors have the following bound
    \begin{displaymath}
        \Vert \nabla_{\Theta} s(u, k) \Vert_{\infty} \leq G, \forall k \in \mathcal{I}
    \end{displaymath}
    Then, the variable $\Delta_m$ satisfies
    \begin{displaymath}
    \begin{split}   
        \Delta_m & \triangleq \frac{1}{m} \mathbb{E} \left[ \sum_{k \in B }\frac{e^{s(u,k)} \vert \nabla_{\Theta} s(u,k) \vert \cdot \vert \frac{e^{s(u, o_m)}}{w(o_m|u)} - \frac{e^{s(u, o_k)}}{w(o_k|u)} \vert}{\left( e^{s(u,i)} + \sum_{j\in [m-1] } \frac{e^{s(u,o_j)}}{m w(o_j|u)} \right)^2} \right] \\
        & \leq \left( \frac{2G}{m} \frac{\max_{j,l \in B} \vert \frac{e^{s(u, j)}}{w(j|u)} - \frac{e^{s(u, l)}}{w(l|k)} \vert \cdot \sum_{j \in B} e^{s(u,j)}}{Z^2 + \sum_{j \in B} \frac{e^{2s(u,j)}}{w(j|u)}} + o \left(\frac{1}{m}\right)\right) \cdot \bm{1}
    \end{split}
    \end{displaymath}
\end{lemma}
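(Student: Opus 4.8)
The plan is to reduce $\Delta_m$, which is a vector-valued expectation of a random ratio, to a single scalar expectation already controlled by Lemma~\ref{lemma:fordeltam_deno}, via three successive simplifications, and then substitute.

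First I would use the coordinate-wise hypothesis $\Vert\nabla_{\Theta}s(u,k)\Vert_\infty \le G$: every entry of the vector $e^{s(u,k)}\lvert\nabla_{\Theta}s(u,k)\rvert$ is at most $G\,e^{s(u,k)}$, so the whole bracket in $\Delta_m$ is dominated entrywise by a scalar expression times $\bm{1}$, and from here the inequality is scalar. Second, I would freeze the randomness that does not matter: for every realization of the resampled items $o_1,\dots,o_m\in B$ the increment $\bigl\lvert \tfrac{e^{s(u,o_m)}}{w(o_m\mid u)} - \tfrac{e^{s(u,o_k)}}{w(o_k\mid u)}\bigr\rvert$ is at most the deterministic quantity $\max_{j,l\in B}\bigl\lvert \tfrac{e^{s(u,j)}}{w(j\mid u)} - \tfrac{e^{s(u,l)}}{w(l\mid u)}\bigr\rvert$, and the factor $\sum_{k\in B}e^{s(u,k)}$ is also deterministic; both pull out of $\mathbb{E}[\cdot]$. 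What remains inside is exactly $\mathbb{E}[1/W]$ with $W = \bigl(e^{s(u,i)} + \tfrac{S_{m-1}}{m}\bigr)^2$, since $\sum_{j\in[m-1]}\tfrac{e^{s(u,o_j)}}{m\,w(o_j\mid u)} = \tfrac{S_{m-1}}{m}$. After these two steps, $\Delta_m \le \tfrac{G}{m}\,\max_{j,l\in B}\bigl\lvert \tfrac{e^{s(u,j)}}{w(j\mid u)} - \tfrac{e^{s(u,l)}}{w(l\mid u)}\bigr\rvert\,\bigl(\sum_{k\in B}e^{s(u,k)}\bigr)\,\mathbb{E}[1/W]\,\bm{1}$.

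Third, I would invoke Lemma~\ref{lemma:fordeltam_deno}, substituting $\mathbb{E}[1/W] \le \bigl(Z^2 + \sum_{j\in B}e^{2s(u,j)}/w(j\mid u)\bigr)^{-1} + o(1/m)$ (in the form used inside that lemma's proof, where $\mathbb{E}[W]$ is developed as $Z^2 + \sum_{j\in B}e^{2s(u,j)}/w(j\mid u)$ up to lower order), and collect terms; bounding $G\le 2G$ (or, for a tight constant, absorbing a $\tfrac{m}{m-1}\le 2$ factor coming from the summand omitted in $W$ relative to the full resampled sum) produces the stated $2G$, and one arrives at exactly the claimed bound.

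The only place that needs a moment's care is the last step: once the $o(1/m)$ remainder of $\mathbb{E}[1/W]$ is multiplied by the $1/m$ prefactor and by the deterministic factors $\max_{j,l\in B}\lvert\cdots\rvert$ and $\sum_{k\in B}e^{s(u,k)}$, the product must still be $o(1/m)$; this holds because those deterministic factors stay bounded (for a fixed batch $B$ they do not involve $m$), so no spurious growth in $m$ is introduced. Everything else is direct algebra, so I do not anticipate a genuine obstacle here — the heavy lifting (the second-order expansion of $\mathbb{E}[1/W]$) has already been carried out in Lemma~\ref{lemma:fordeltam_deno}.
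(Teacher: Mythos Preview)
Your proposal is correct and follows essentially the same route as the paper's proof: replace the random increment by the deterministic $\max_{j,l\in B}$, pull out the deterministic sum $\sum_{k\in B}e^{s(u,k)}$, use the coordinatewise bound $\lvert\nabla_\Theta s(u,k)\rvert\le G\cdot\bm{1}$, and then invoke Lemma~\ref{lemma:fordeltam_deno} on the remaining $\mathbb{E}[1/W]$. The only cosmetic difference is the order in which you apply the gradient bound versus the max bound, and your handling of the factor $2$ is no less (and no more) justified than the paper's, which simply inserts it without comment.
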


\begin{proof}
    \begin{displaymath}
        \begin{split}
            \Delta_m & = \frac{1}{m} \mathbb{E} \left[ \sum_{k \in B }\frac{e^{s(u,k)} \vert \nabla_{\Theta} s(u,k) \vert \cdot \vert \frac{e^{s(u, o_m)}}{w(o_m|u)} - \frac{e^{s(u, o_k)}}{w(o_k|u)} \vert}{\left( e^{s(u,i)} + \sum_{j\in [m-1] } \frac{e^{s(u,o_j)}}{m w(o_j|u)} \right)^2} \right] \\
            & \leq  \frac{1}{m} \mathbb{E} \left[ \sum_{k \in B }\frac{e^{s(u,k)}  \vert \nabla_{\Theta} s(u,k) \vert \cdot \max_{j,l \in B} \vert \frac{e^{s(u, j)}}{w(o_j)} - \frac{e^{s(u, l)}}{w(l|u)} \vert}{\left( e^{s(u,i)} + \sum_{j\in [m-1] } \frac{e^{s(u,o_j)}}{m w(o_j|u)} \right)^2} \right] \\
            & \leq \frac{2 \cdot \max_{j,l \in B} \vert \frac{e^{s(u, j)}}{w(o_j)} - \frac{e^{s(u, l)}}{w(l|u)} \vert}{m } \cdot \mathbb{E}\left[  \frac{1}{ \left( e^{s(u,i)} + \frac{S_{m-1}}{m} \right)^2}\right] \cdot \sum_{k \in B } e^{s(u,k)}  \vert \nabla_{\Theta} s(u,k) \vert \\
            & \leq \frac{2\cdot G \cdot \max_{j,l \in B} \vert \frac{e^{s(u, j)}}{w(o_j)} - \frac{e^{s(u, l)}}{w(l|u)} \vert}{m } \cdot \mathbb{E}\left[  \frac{1}{ \left( e^{s(u,i)} + \frac{S_{m-1}}{m} \right)^2}\right] \cdot \left(\sum_{k \in B } e^{s(u,k)} \right) \cdot \bm{1}
        \end{split}
    \end{displaymath}
    By employing Lemma~\ref{lemma:fordeltam_deno}, we have that
    \begin{displaymath}
        \Delta_m \leq \left( \frac{2G}{m} \frac{\max_{j,l \in B} \vert \frac{e^{s(u, j)}}{w(j|u)} - \frac{e^{s(u, l)}}{w(l|k)} \vert \cdot \sum_{j \in B} e^{s(u,j)}}{Z^2 + \sum_{j \in B} \frac{e^{2s(u,j)}}{w(j|u)}} + o \left(\frac{1}{m}\right)\right) \cdot \bm{1}
    \end{displaymath} 
\end{proof}

\begin{lemma}\label{lemma:toolbox}
    For a positive random variable $V$ and $V \geq a > 0$, the expectation follows
    \begin{displaymath}
        \frac{1}{\mathbb{E}\left[V\right]} \leq \mathbb{E} \left[ \frac{1}{V} \right] \leq  \mathbb{E} \left[ \frac{1}{V} \right] + \frac{Var(V)}{a^3}
    \end{displaymath}
    and 
    \begin{displaymath}
        \frac{1}{\mathbb{E}\left[V\right]} \leq \mathbb{E} \left[ \frac{1}{V} \right] \leq  \mathbb{E} \left[ \frac{1}{V} \right] + \frac{Var(V)}{\mathbb{E}\left[ V\right]^3} +  \frac{\mathbb{E} | V - \mathbb{E}[V] |^3}{a^4}
    \end{displaymath}
\end{lemma}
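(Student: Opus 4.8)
The plan is to derive both inequality chains from a single elementary algebraic identity, so that no appeal to Taylor's theorem or its Lagrange remainder is needed. Write $\mu = \mathbb{E}[V]$. Since $V \geq a$ almost surely, we also have $\mu \geq a > 0$, so both $V$ and $\mu$ are bounded away from zero and in particular $\mathbb{E}[1/V] \leq 1/a < \infty$. (Here the right-hand member of each chain is understood as the sandwich around $1/\mathbb{E}[V]$, namely $\frac{1}{\mathbb{E}[V]} + \frac{Var(V)}{a^{3}}$ and $\frac{1}{\mathbb{E}[V]} + \frac{Var(V)}{\mathbb{E}[V]^{3}} + \frac{\mathbb{E}|V - \mathbb{E}[V]|^{3}}{a^{4}}$.)

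First I would record the pointwise identity
\begin{displaymath}
    \frac{1}{V} - \frac{1}{\mu} = \frac{\mu - V}{\mu V} = \frac{\mu - V}{\mu^{2}} + \frac{(\mu - V)^{2}}{\mu^{2} V},
\end{displaymath}
which follows by clearing denominators. Taking expectations, the first summand on the right vanishes because $\mathbb{E}[\mu - V] = 0$, so $\mathbb{E}[1/V] - 1/\mu = \mathbb{E}[(\mu - V)^{2}/(\mu^{2} V)] \geq 0$; this already gives the left (Jensen-type) inequality. Since $\mu \geq a$ and $V \geq a$ force $\mu^{2} V \geq a^{3}$ pointwise, the same expectation is at most $\frac{1}{a^{3}} \mathbb{E}[(\mu - V)^{2}] = \frac{Var(V)}{a^{3}}$, which is the first refined bound.

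For the third-order bound I would push the expansion one step further via $\frac{1}{\mu^{2} V} = \frac{1}{\mu^{3}} + \frac{\mu - V}{\mu^{3} V}$, so that
\begin{displaymath}
    \frac{1}{V} - \frac{1}{\mu} = \frac{\mu - V}{\mu^{2}} + \frac{(\mu - V)^{2}}{\mu^{3}} + \frac{(\mu - V)^{3}}{\mu^{3} V}.
\end{displaymath}
Taking expectations again annihilates the first term, turns the second into $\frac{Var(V)}{\mu^{3}}$, and leaves a remainder whose absolute value is at most $\frac{1}{\mu^{3} V} |\mu - V|^{3} \leq \frac{|\mu - V|^{3}}{a^{4}}$ (using $\mu^{3} V \geq a^{4}$); hence $\mathbb{E}[1/V] - 1/\mu \leq \frac{Var(V)}{\mu^{3}} + \frac{\mathbb{E}|V - \mu|^{3}}{a^{4}}$, as claimed.

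Everything here is routine; the only points needing care are the observation that $\mu \geq a$ — which is exactly what makes the denominators $\mu^{2} V$ and $\mu^{3} V$ uniformly bounded below by $a^{3}$ and $a^{4}$ — the sign bookkeeping when passing to absolute values in the third-order remainder, and the tacit integrability hypotheses (finiteness of $Var(V)$ for the first bound, of $\mathbb{E}|V - \mu|^{3}$ for the second), without which the corresponding right-hand side is vacuous. I do not expect any genuine obstacle: the telescoping algebraic identity does all the work that a Taylor-with-remainder argument would otherwise have to do.
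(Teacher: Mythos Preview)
Your proof is correct. The paper does not actually prove this lemma --- it simply writes ``The proof can be attached according to Rawat's work~[21]'' and gives no argument of its own --- so there is no in-paper proof to compare against. Your telescoping identity
\[
\frac{1}{V} - \frac{1}{\mu} \;=\; \frac{\mu - V}{\mu^{2}} + \frac{(\mu - V)^{2}}{\mu^{2} V}
\;=\; \frac{\mu - V}{\mu^{2}} + \frac{(\mu - V)^{2}}{\mu^{3}} + \frac{(\mu - V)^{3}}{\mu^{3} V}
\]
yields a fully self-contained and elementary derivation; the only nontrivial point, which you handle correctly, is that $\mu = \mathbb{E}[V] \geq a$ combined with $V \geq a$ forces $\mu^{2}V \geq a^{3}$ and $\mu^{3}V \geq a^{4}$, giving the required denominators. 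You are also right to interpret the right-hand members of both chains as $\frac{1}{\mathbb{E}[V]} + \cdots$ rather than the literally printed $\mathbb{E}[1/V] + \cdots$ (which would be vacuous); the way the lemma is actually invoked in the proofs of Lemmas~\ref{lemma:second} and~\ref{lemma:fordeltam_deno} confirms this reading.
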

The proof can be attached according to Rawat's work[21].

\section{Experimental Supplements}
\subsection{Implementation Details}
The proposed methods, including baseline methods, are implemented with the PyTorch learning framework in a Linux operating system and a Tesla V100 GPU. The Adam optimizer is utilized with a discount of 0.95 on learning rate at every 5 epochs. We use the ID features for users and items by default and the embedding size is set as 32. For each dataset, the batch size is fixed to 2048 and the learning rate is set to 0.001, where each dataset is training with 100 epochs. The coefficient of the $l_2$-regularization is tuned over $\{1e-4,,1e-5,1e-6\}$. The item popularity is calculated by the normalized item frequency, i.e., $p(i)=\frac{f_i}{\sum_j f_j}$ statistically summarized from training data. The hyperparameter $\lambda$ for $\rchi$IR is tuned over $\{0.0, 0.2, 0.5, 0.8, 1.0 \}$

\subsection{Effect of $\lambda$}
Since the coefficient plays a vital role in Eq(4), we conduct experiments by varying the value of $\lambda$ to figure out the influence of this hyperparameter. Experiments are conducted on all five datasets with $\rchi$IR, as shown in Figure~\ref{fig:lambda}.

\begin{figure}[th]
    \centering
    \subfigure[Gowalla]{
        \includegraphics[width=0.18\columnwidth]{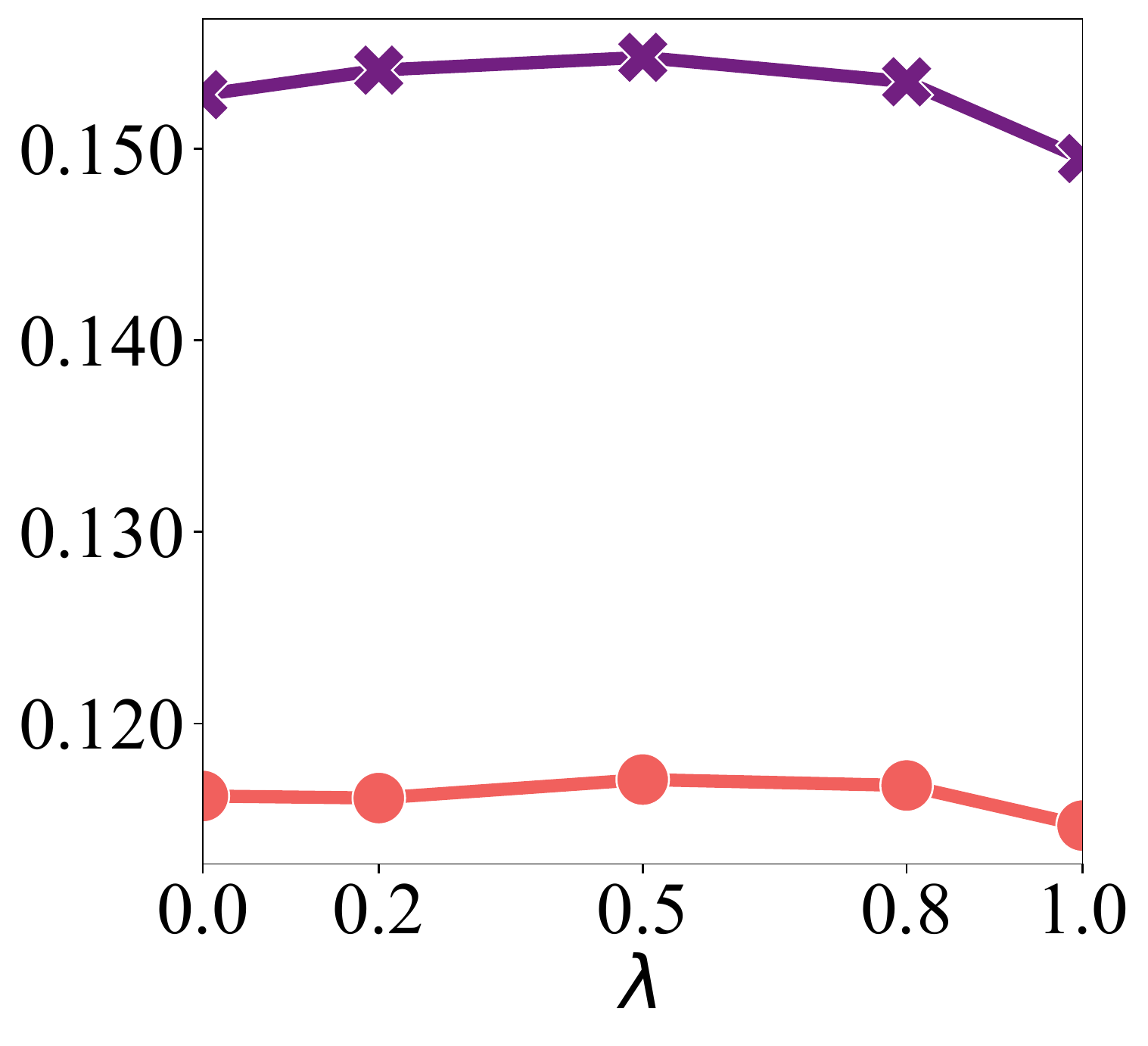}
    } \hspace{-0.20cm}
    \subfigure[Amazon]{
        \includegraphics[width=0.18\columnwidth]{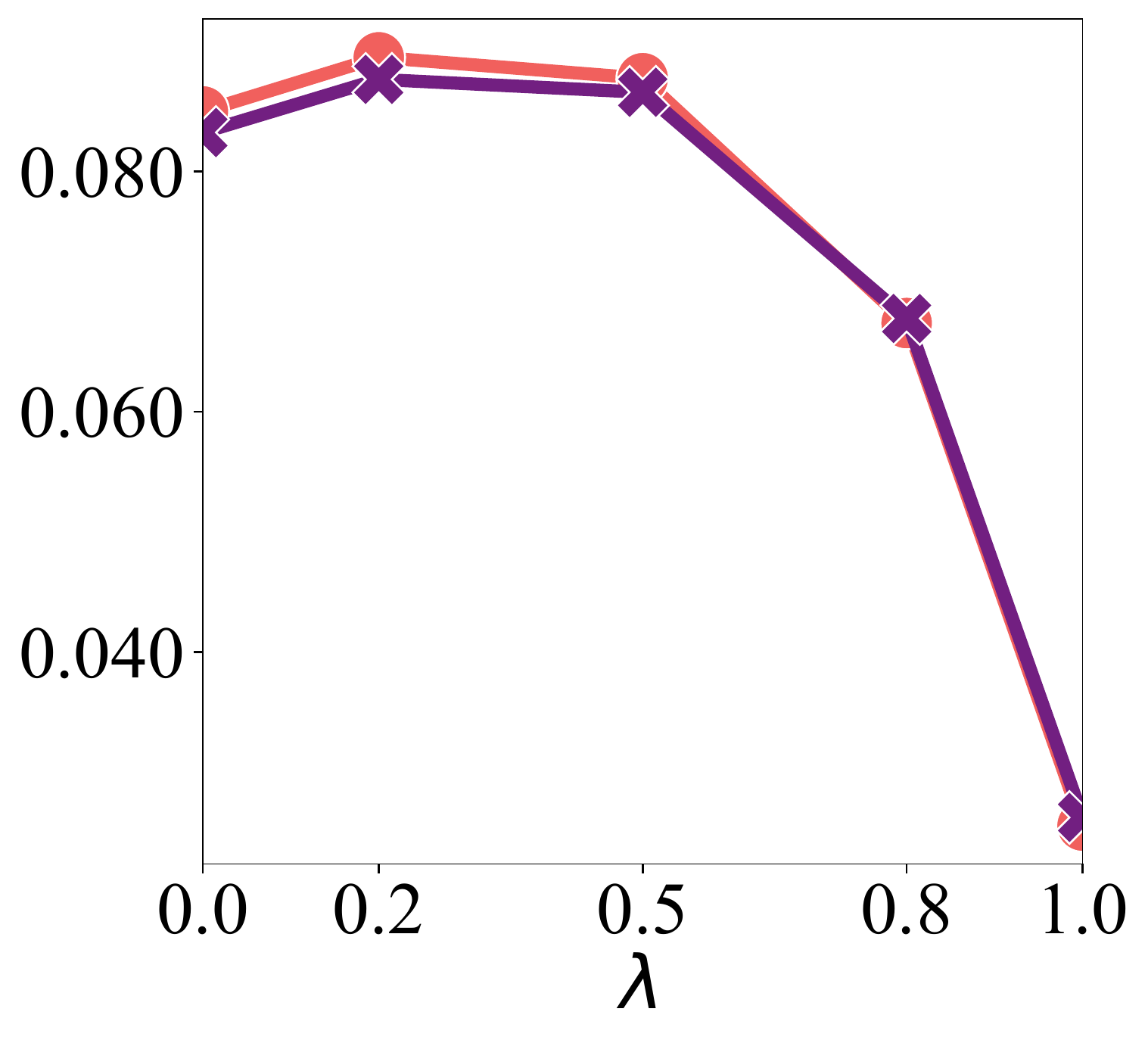}
    }\hspace{-0.20cm}
    \subfigure[Ta-feng]{
        \includegraphics[width=0.18\columnwidth]{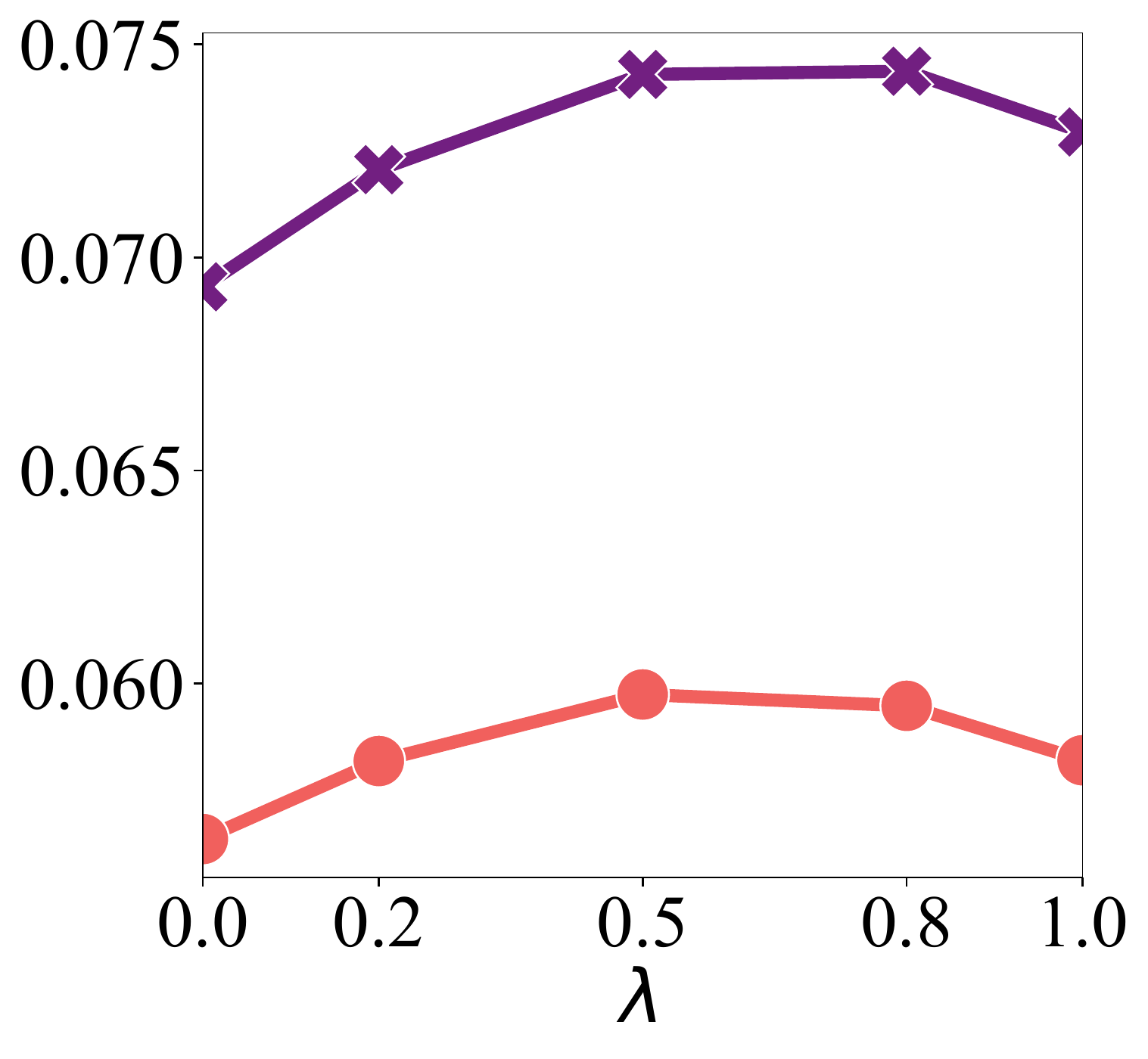}
    }\hspace{-0.20cm}
    \subfigure[Echonest]{
        \includegraphics[width=0.18\columnwidth]{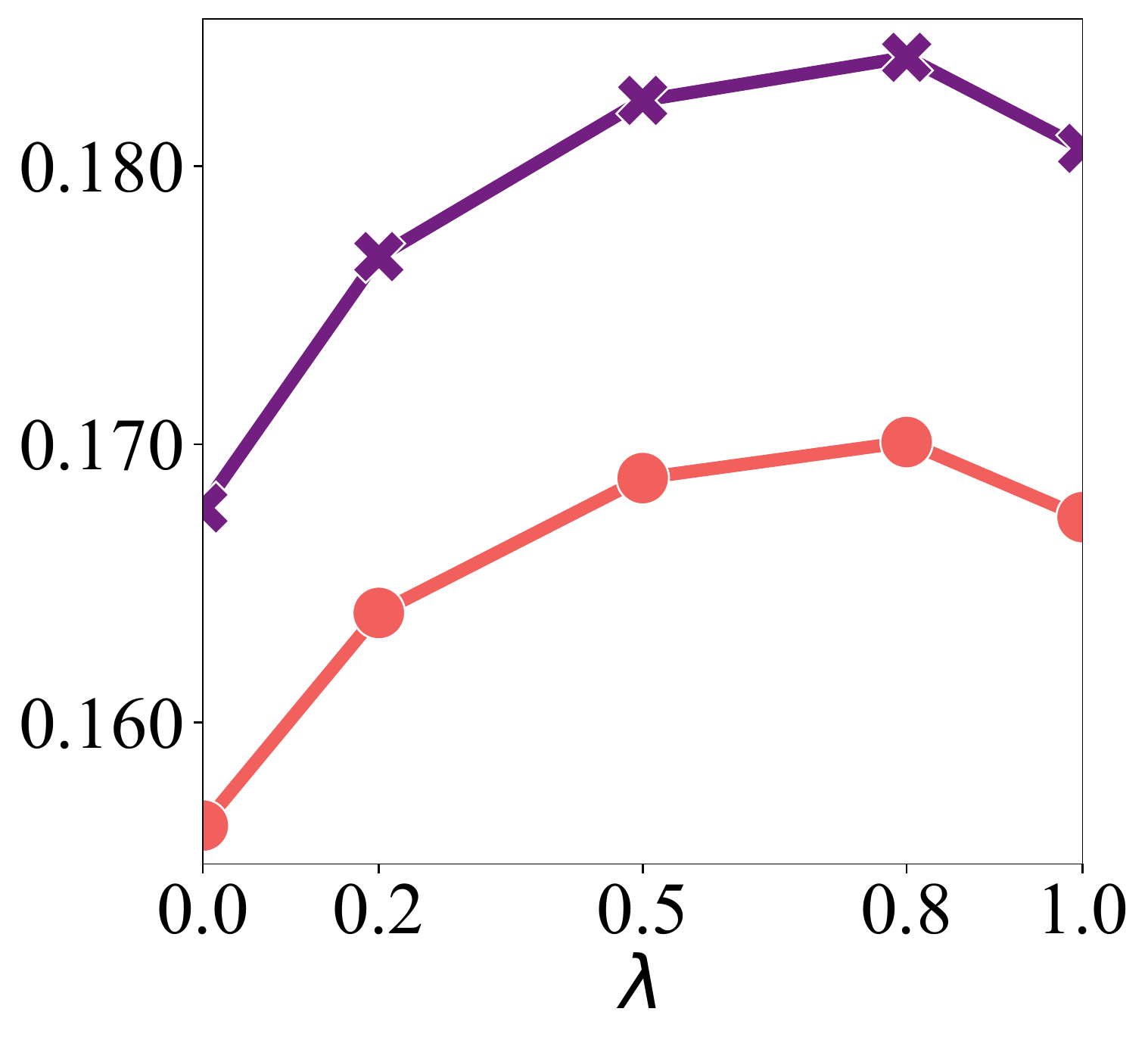}
    }\hspace{-0.20cm}
    \subfigure[Tmall]{
        \includegraphics[width=0.18\columnwidth]{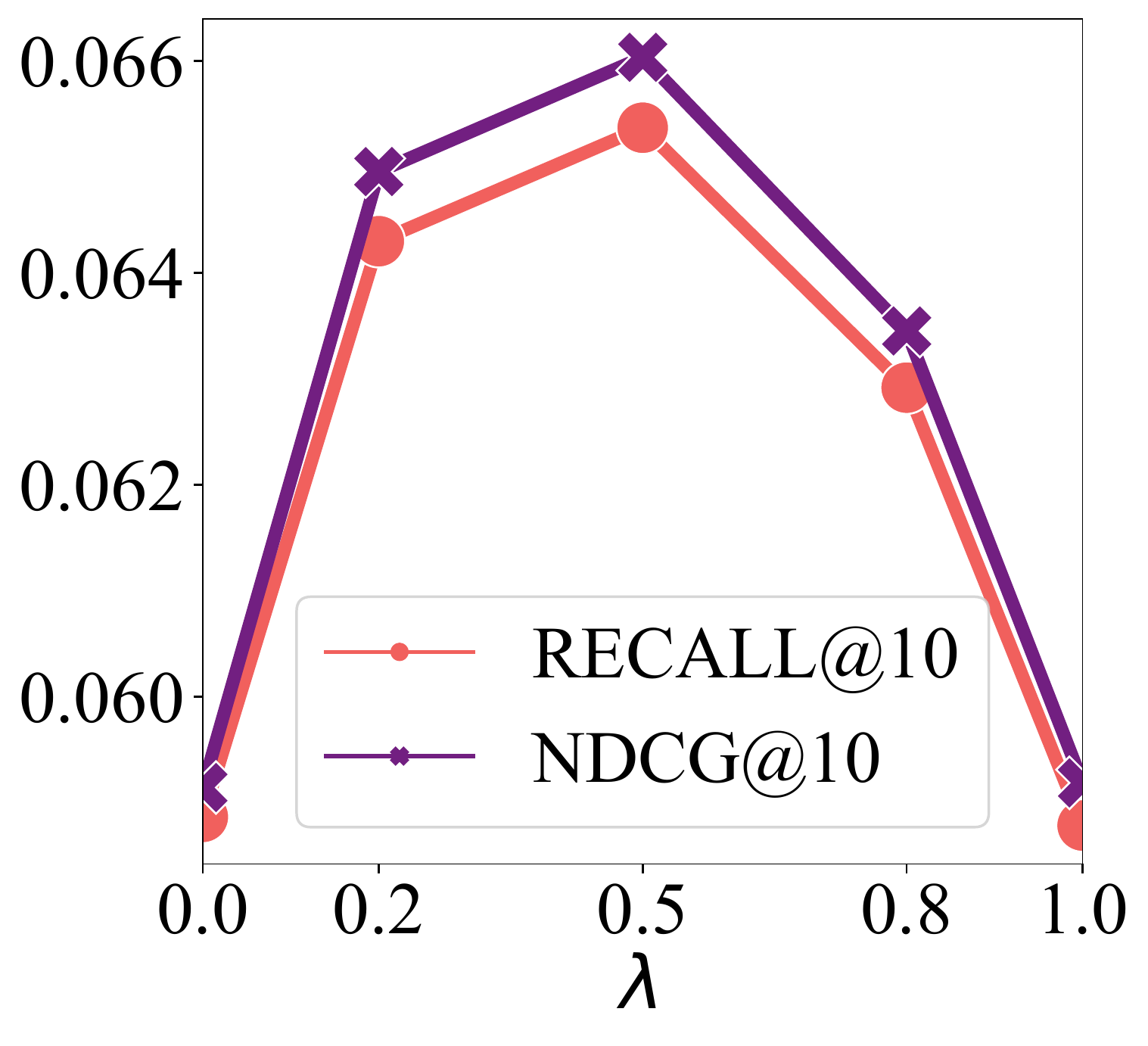}
    }
    \vspace{-0.2cm}
    \caption{Effect of $\lambda$}
    \label{fig:lambda}
\end{figure}

Both the NDCG@10 and RECALL@10 have the same changing curve under different $\lambda$.
A smaller value of $\lambda$ indicates the inbatch items contributes more gradient, whereas a lager value corresponds with that the cache plays a more important role. When $\lambda=0.5$, the cached items and inbatch items contribute equally to convergence. For further study, we may link the resampled ratio with $\lambda$, instead of resampling half of the items from the cache and the batch.

\subsection{Varying batch sizes}
In this chapter, we further explore BIR by taking a look at how NDCG@10 changes under different batch sizes, i.e., $\vert  B \vert $. Experiments are conducted on the five datasets with SSL-Pop and BIR, as shown in Figure~\ref{fig:batch_size}. The batch sizes range from 256 to 8192. We fix the learning epoch to 100, and tune the learning rate and the weight decay depending on NDCG@10 for different batch sizes.

\begin{figure}
    \centering
    \subfigure[Gowalla]{
        \includegraphics[width=0.18\columnwidth]{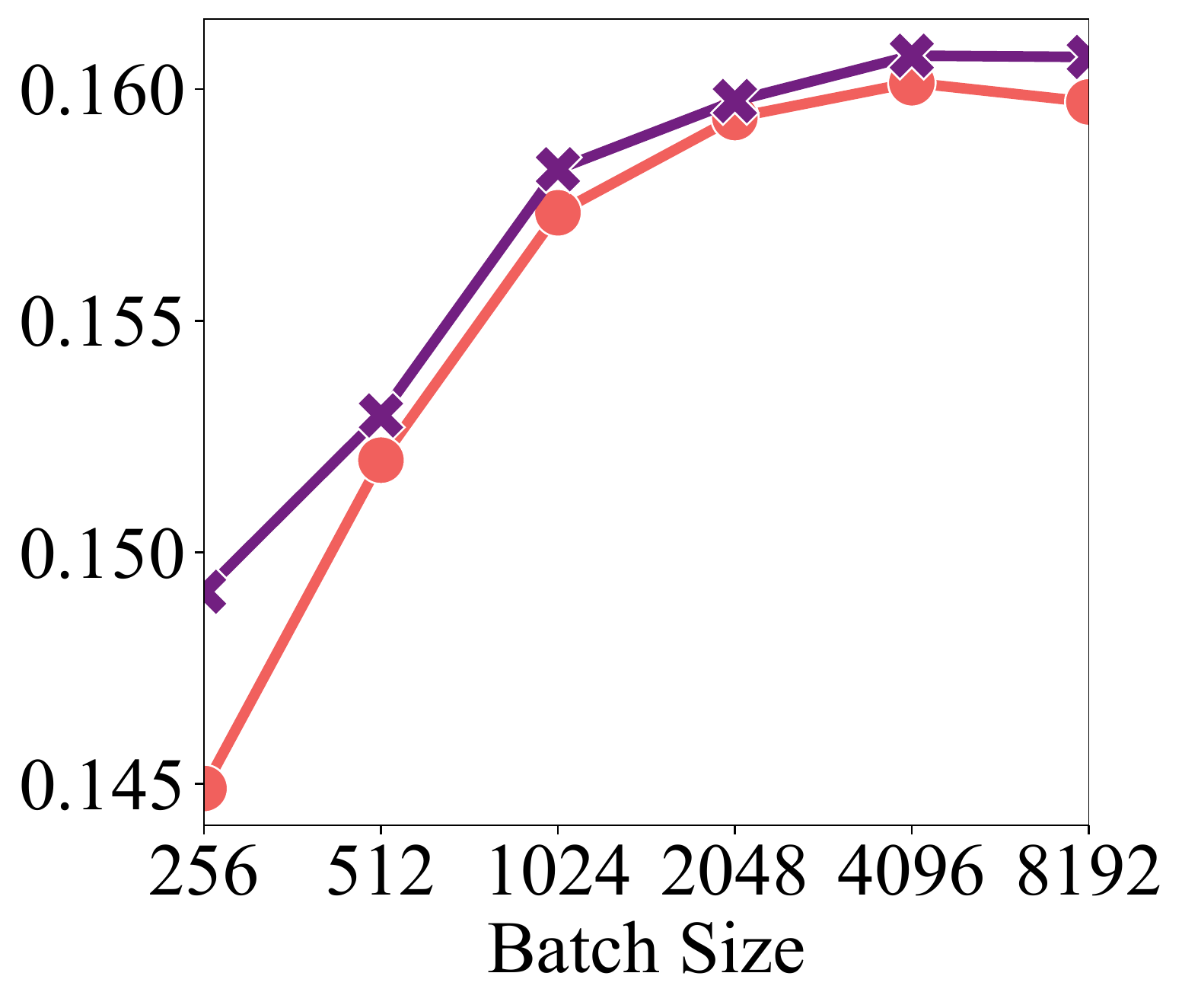}
    }\hspace{-0.2cm}
    \subfigure[Amazon]{
        \includegraphics[width=0.18\columnwidth]{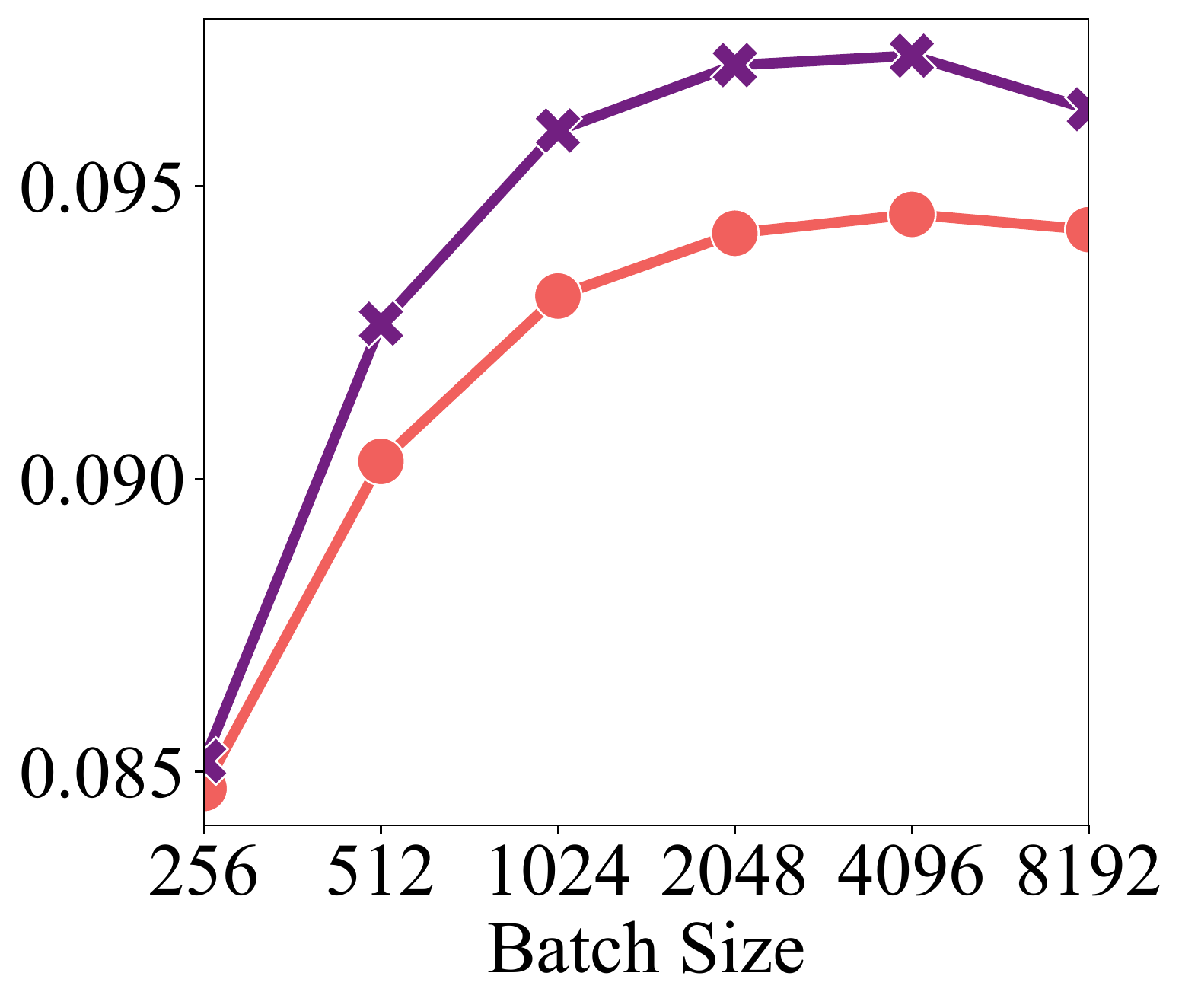}
    }\hspace{-0.2cm}
    \subfigure[Ta-feng]{
        \includegraphics[width=0.18\columnwidth]{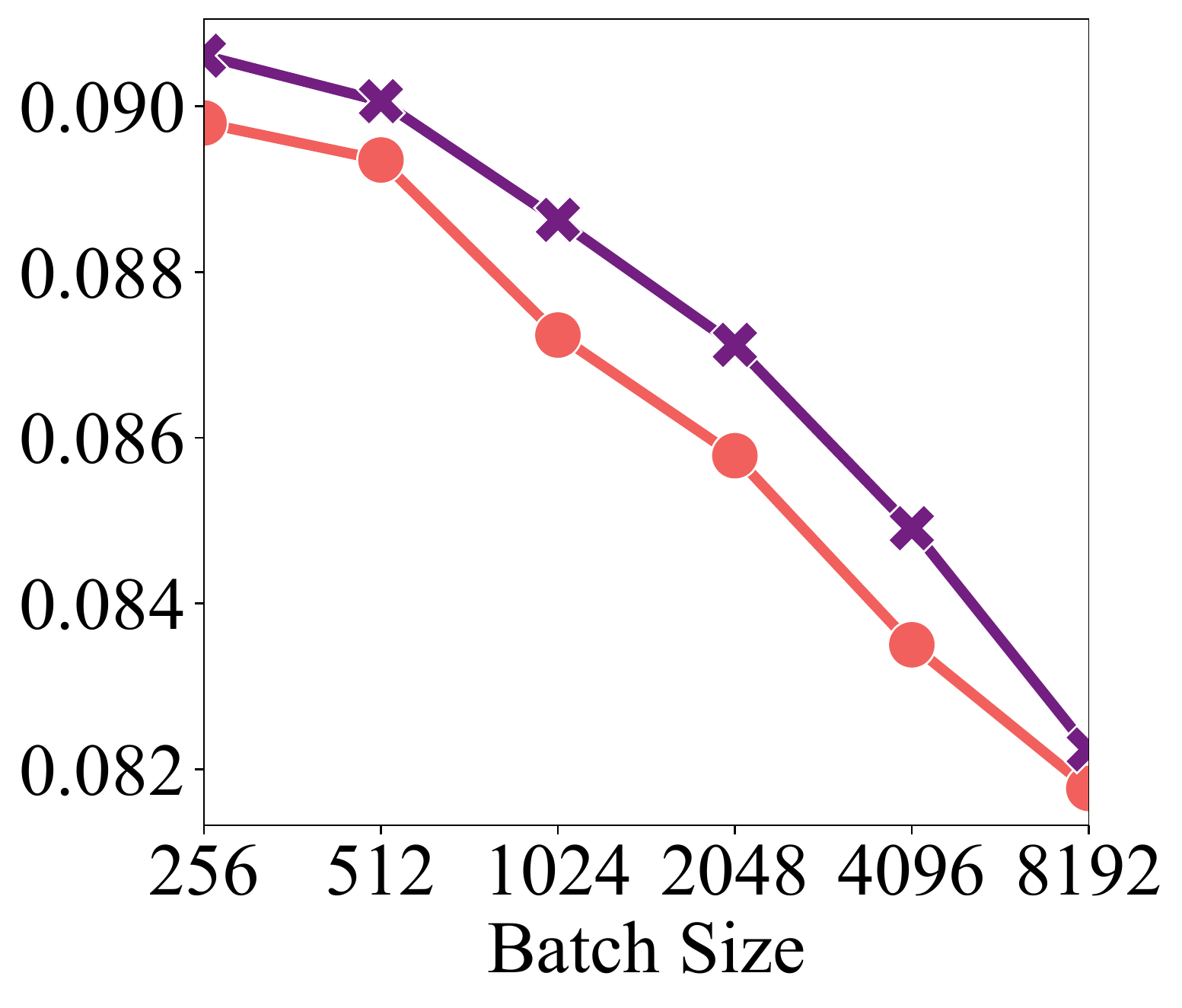}
    }\hspace{-0.2cm}
    \subfigure[Echonest]{
        \includegraphics[width=0.18\columnwidth]{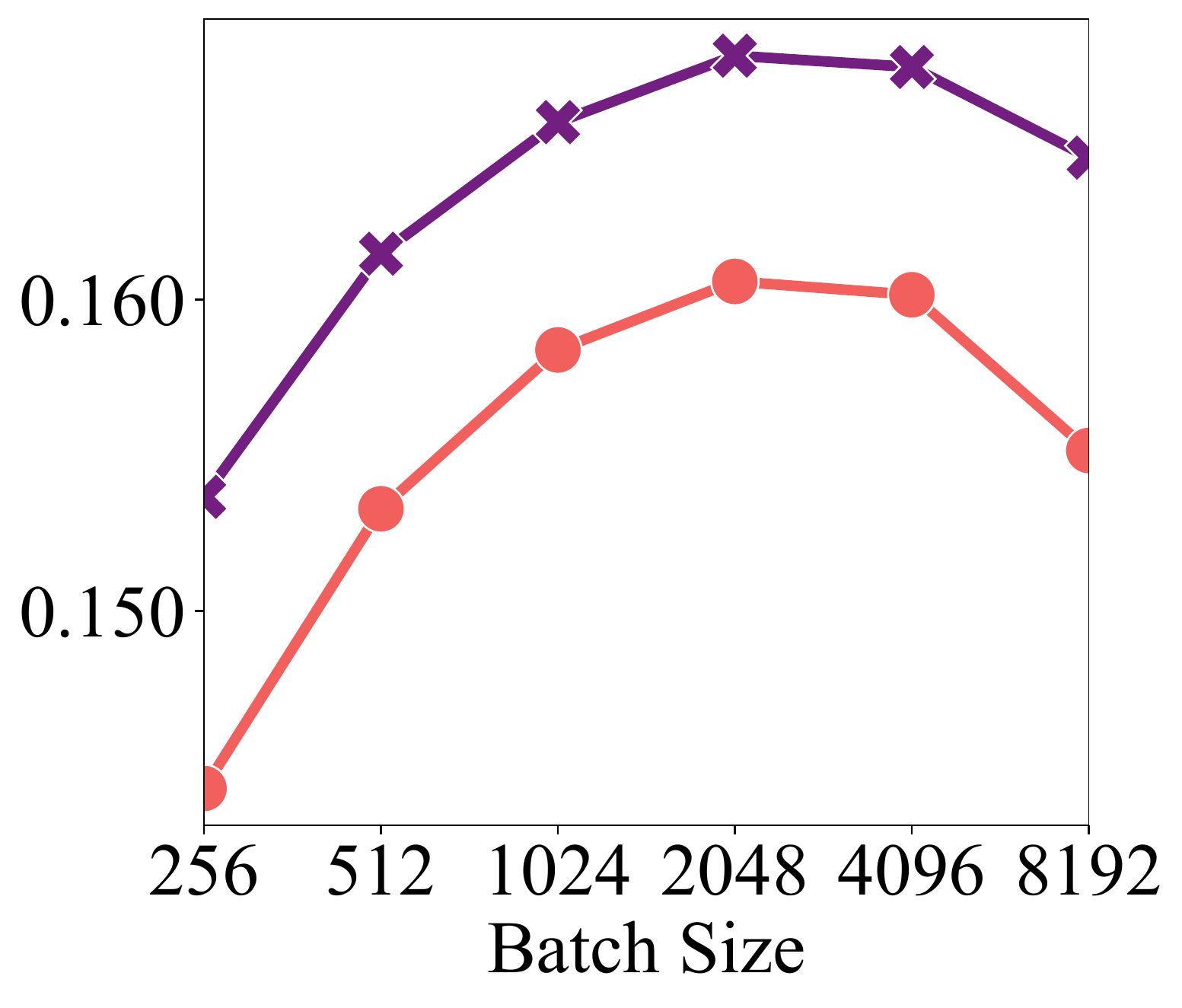}
    }\hspace{-0.2cm}
    \subfigure[Tmall]{
        \includegraphics[width=0.18\columnwidth]{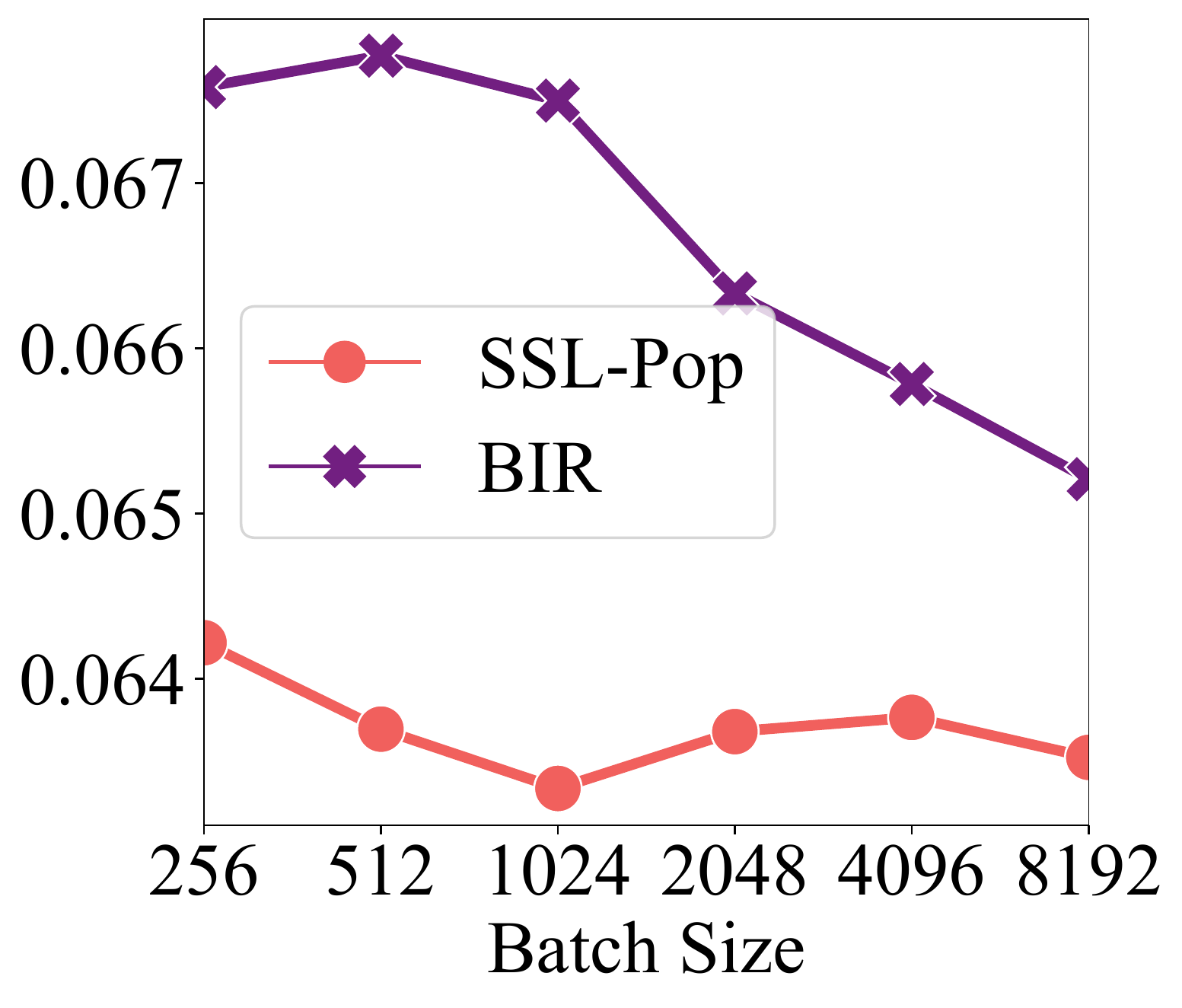}
    }
    \vspace{-0.3cm}
    \caption{NDCG@10 vs. Batch Size}
    \label{fig:batch_size}
\end{figure}

Considering the real online scenarios with numerous features for both users and items, the batch size would be restricted to small values due to the limited computational resources. The performance under small batch size is more crucial for deployment. BIR consistently improves the performances even under smaller batch sizes, indicating the feasibility for online systems.

\end{document}